\newcommand{\bl}[1]{\textcolor{blue}{#1}}
\newcommand\numeq[1]%
\newtheorem{theorem}{Theorem}
\newtheorem{definition}{Definition}
\newtheorem{proposition}[theorem]{Proposition}
\begin{document}

\title{Quantum State Discrimination Circuits Inspired by\\Deutschian Closed Timelike Curves}
\author{Christopher Vairogs}
\affiliation{Department of Mathematics and Department of Physics, University of Florida, Gainesville, Florida 32611, USA}
\affiliation{Hearne Institute for Theoretical Physics, Department of Physics and Astronomy, and Center for Computation and Technology, Louisiana State University, Baton Rouge, Louisiana 70803, USA}
\author{Vishal Katariya}
\affiliation{Hearne Institute for Theoretical Physics, Department of Physics and Astronomy, and Center for Computation and Technology, Louisiana State University, Baton Rouge, Louisiana 70803, USA}
\author{Mark M. Wilde}
\affiliation{Hearne Institute for Theoretical Physics, Department of Physics and Astronomy, and Center for Computation and Technology, Louisiana State University, Baton Rouge, Louisiana 70803, USA}
\date{\today}

\begin{abstract}
It is known that a party with access to a Deutschian closed timelike curve (D-CTC) can perfectly distinguish multiple non-orthogonal quantum states. In this paper, we propose a practical method for discriminating multiple non-orthogonal states, by using a previously known quantum circuit designed to simulate D-CTCs. This method relies on multiple copies of an input state, multiple iterations of the circuit, and a fixed set of unitary operations. We first characterize the performance of this circuit and study its asymptotic behavior. We also show how it can be equivalently recast as a local, adaptive circuit that may be implemented simply in an experiment. Finally, we prove that our state discrimination strategy achieves the multiple Chernoff bound when discriminating an arbitrary set of pure qubit states. 
\end{abstract}

\maketitle

\tableofcontents

\section{Introduction} \label{sec:introduction}

Closed timelike curves (CTCs) arise as solutions to the Einstein field equations in general relativity. While the existence of CTCs is unverified, they bring up the possibility of time travel and the paradoxes associated with it~\cite{Godel}. To better understand the properties of these objects, several quantum information theoretic models of CTCs have been proposed \cite{Deutsch, Svetlichny, P-CTC-Lloyd-et-al, T-CTC-Allen}.  

One such CTC model is that given by Deutsch \cite{Deutsch}, where paradoxes associated with time travel using CTCs are resolved by a self-consistency condition.
This self-consistency condition introduces a non-linearity in the evolution of a quantum state through a Deustchian CTC (D-CTC). Standard quantum mechanics demands that the evolution of an arbitrary state is linear, which places restrictions on physical evolutions, such as the no-cloning theorem \cite{Park1970,nat1982,D82} and the Heisenberg uncertainty principle.

Thus, in contrast to standard quantum mechanics, this non-linearity allows for many remarkable characteristics associated with D-CTCs beyond what is allowed by standard quantum mechanics.
D-CTCs may be utilized to violate the no-cloning theorem~\cite{No-Cloning-Violation-BWW}, the Holevo bound~\cite{Heisenberg-Holevo-Violation-BHW}, the second law of thermodynamics~\cite{CTC-SecondLaw}, and enable quantum computers to solve problems in the computational complexity class PSPACE~\cite{PSPACE-Watrous}. (However, note that these claims have been debated in the literature \cite{BLSS09,CM10}).

The aspect of D-CTCs that we are most interested in here is their use in perfectly distinguishing multiple non-orthogonal quantum states, violating Heisenberg's uncertainty principle~\cite{Heisenberg-Holevo-Violation-BHW}. We use ideas contained in the D-CTC-assisted state discrimination method to create a practical, iterative state discrimination circuit that works by approximating the behavior of a D-CTC. 

Even though D-CTCs are inaccessible, we may simulate the evolution of the state of a system traveling along a D-CTC. Such simulations are important to us not only because they allow us to gain a better understanding of the properties of D-CTCs in an accessible setting, but also because they enable us to exploit their unique characteristics for applications. 
Simulating a D-CTC is directly related to computing the fixed point of a quantum channel, which is a difficult task \cite{PSPACE-Watrous}. One D-CTC simulation method uses polarization-encoded photons as qubits~\cite{Simulations-RBMWR}, which involves computing the self-consistent solution for the state of a system traveling along a D-CTC. This computation is practical for simple quantum systems, but it becomes prohibitively expensive for larger systems. Circumventing this issue, the authors of \cite{BW-Simulations-of-CTCs} proposed a method for simulating CTCs that uses an iterative quantum circuit, with the circuit approaching the behavior of a D-CTC with an increasing number of iterations. It is also ``self-contained'' in the sense that it does not involve the discarding of experimental data, unlike that in~\cite{Simulations-RBMWR}.

Strategies for discriminating non-orthogonal quantum states have been analyzed in
many different contexts. In this paper, we  restrict our attention to the minimum-error discrimination of pure states. The case of discriminating two quantum states has
been well studied \cite{Helstrom, Two-State-Disc-Acin-et-al, ThinkingGlobalHiggins, Noisy-Qubits-Flatt-et-al}. Optimal minimum-error approaches for discriminating multiple quantum states
have been characterized in a variety of specific cases \cite{GeomUniformBan, MirrorSymAndersson, Bae-Multiple-Qubit-States, Ha2013, Slussarenko2017, Weir2017}. A strategy using a
theoretical apparatus for distinguishing multiple arbitrary quantum states has been
proposed in \cite{Blume-Kohout-QDataGathering}. Our method is similar in that it applies to the general task of discriminating multiple (and possibly non-orthogonal)
quantum states.

The major contribution of our paper is a practical state discrimination strategy for multiple non-orthogonal pure states that combines the D-CTC simulation circuit of \cite{BW-Simulations-of-CTCs} and the CTC-assisted state discrimination strategy of \cite{Heisenberg-Holevo-Violation-BHW}. We briefly state our strategy here. Assume that the set of states to be discriminated is $\{|\psi_i\rangle\}_{i=0}^{N-1}$ and that we are given $n$ copies of an unknown state randomly selected from this set. Assume that measurements are made in the basis $\{|i\rangle\}_{i=0}^{N-1}$. Suppose that we have access to a set of unitary operations $\{U_i\}_{i=0}^{N-1}$ such that $U_i|\psi_i\rangle = |i\rangle$. Suppose that the $n$ copies of the unknown state are ordered. Perform a measurement on the first copy of the unknown state. If outcome $j$ is obtained, then apply the unitary operation $U_j$ to the second copy of the unknown state and perform a measurement on the resulting state. If outcome $k$ is obtained, then apply the unitary operation $U_k$ to the third copy of the unknown state. Repeat this procedure until a unitary operation has been applied to all $n$ states. If outcome $l$ is obtained after measuring the $n$-th state, then claim that the unknown state is the state $|\psi_l\rangle$.

Our strategy proves advantageous because it relies only on a fixed set of local operations and may be implemented by performing measurements on each copy of a state, with each successive measurement being determined by the outcome of the previous one, i.e., a $\textit{local adaptive strategy}$ \cite{Two-State-Disc-Acin-et-al, ThinkingGlobalHiggins, Noisy-Qubits-Flatt-et-al}. In this way, our state discrimination strategy is amenable to a practical experimental implementation. [See Figure~\ref{fig:adaptive-circuit-3} for a schematic of the method.]

Furthermore, we calculate the asymptotic rate of decay of the average probability of error of our state discrimination circuit, which we use to show that our state discrimination scheme attains the fundamental limit, i.e., the multiple Chernoff bound \cite{NS11,Li-Chernoff-Bound}, for the general task of discriminating an arbitrary set of pure qubit states. This is another desirable property that our state discrimination scheme possesses.

The rest of our paper is structured as follows. We provide some preliminaries and set up notation in Section~\ref{sec:prelims}.  Then, in Section~\ref{subsec:prelim-CTC-description}, we provide our state discrimination circuit and also show how it can be implemented as a local, adaptive circuit. In Section~\ref{sec:error-analysis}, we calculate its average probability of error in distinguishing states. We show that this probability of error converges to zero in the limit of infinitely many iterations of the circuit. Finally in Section~\ref{sec:examples}, we consider two examples, and show how our scheme achieves the multiple Chernoff bound when discriminating an arbitrary set of pure qubit states.

\section{Preliminaries} \label{sec:prelims}

\subsection{State Discrimination}

We first describe the problem of state discrimination considered in this paper. The goal is to distinguish the states in the set $\{ \rho_i \}_{i=0}^{N-1}$, where the state $\rho_i$ is chosen with probability $p_i$. The minimum error approach to state discrimination may be pictured as the following game between Alice and Bob. Alice and Bob agree on the set $\{ \rho_i \}_{i=0}^{N-1}$ of quantum states and probability distribution $\{p_i\}_i$  that they will use. Alice prepares a state $\rho_j$ from that set with probability $p_j$ and sends it to Bob. Bob then, in an attempt to identify Alice's state, performs a measurement described by the set $\{M_i\}_i$ of measurement operators. He guesses that the state Alice prepared is $\rho_k$ if he measures outcome $k$. Bob's goal is to find the measurement that minimizes his average probability of error, defined as follows:
\begin{equation}
p_e \coloneqq  \sum_{k=0}^{N-1} p_k \sum_{j\neq k} 
\operatorname{Tr}\big\{M_j\rho_k\big\} =
 1 - \sum_{k=0}^{N-1} p_k 
\operatorname{Tr}\big\{M_k\rho_k\big\}.
\end{equation}
\noindent In the case when $N=2$, the Helstrom measurement is an optimal measurement \cite{Helstrom}.

An alternative approach to the state discrimination problem is to assume that Bob has $n$ available copies of the state $\rho_j$ that Alice selects. In this case, an optimal measurement is a collective measurement on all $n$ states $\rho_j^{\otimes n}$. In the limit of large $n$, the optimal error probability $p_e^{\operatorname{opt}}$ decays exponentially as
\begin{equation}
p_e^{\operatorname{opt}} \sim e^{-n\xi^{\operatorname{opt}}},
\end{equation} 
where the value $\xi^{\operatorname{opt}}$ is known as the multiple Chernoff bound \cite{Li-Chernoff-Bound} and is given by
\begin{equation}\label{eq:chernoff-bound-def}
\xi^{\operatorname{opt}} = 
 -\ln\! \left[\max_{i\neq j}\min_{0\leq s\leq 1}\operatorname{Tr}\{\rho_i^s\rho_j^{1-s}\}\right].
\end{equation}
The multiple Chernoff bound places a fundamental limit on how fast the error probability decays for a multiple-copy state discrimination scheme \cite{Li-Chernoff-Bound}. See \cite{NS11} for the special case when all of the states in the set $\{\rho_i\}_i$ are pure.


\subsection{Deutschian Closed Timelike Curves}

Next, we describe the model for CTCs that is applicable to our work---namely, the Deutschian (D-CTC) model~\cite{Deutsch}. We note that there also exist other models for quantitatively describing the behavior of CTCs, namely post-selected quantum teleportation CTCs (P-CTCs)~\cite{P-CTC-Lloyd-et-al} and transition probability CTCs (T-CTCs)~\cite{T-CTC-Allen}. 

The D-CTC model involves two sub-systems: the $\textit{chronology-respecting}$ (CR) system~$S$, which does not travel through the CTC, and the $\textit{chronology-violating}$ (CV) system $C$, which does travel through the CTC. The different CTC models differ in the manner in which they resolve or avoid causality paradoxes. In the D-CTC model, this is accomplished by requiring the state of the chronology-violating system to be a fixed point of the evolution that results from the interaction between the CR and CV systems. 

Let $\sigma_C$ be the state of the CV system, and let $\rho_S$ be the state of the CR system. In Deutsch's model, systems $S$ and $C$ are assumed to be in a tensor-product state $\rho_S \otimes\sigma_C$ before they interact unitarily via the CTC. They then interact according to an interaction unitary~$V_{SC}$ before the CV system enters the future mouth of its wormhole, so that the state of the composite system after the evolution is  $V_{SC}(\rho_{S}\otimes\sigma_C)V_{SC}^{\dagger}$. We refer to states of the CR system as the CV system emerges from the past mouth of its wormhole and as it enters the future mouth of its wormhole as ``initial'' CR states and ``final'' CR states, respectively. We define ``initial'' CV states and ``final'' CV states similarly.

 The evolution of the CV system is represented by the quantum channel
 \begin{equation}\label{eq:CTC-map-definition}
     \mathcal{N}_{V,\rho}: \sigma_C \mapsto 
     \operatorname{Tr}_{S} \big\{V_{SC}(\rho_{S}\otimes\sigma_C)V_{SC}^{\dagger}\big\},
 \end{equation}
 which maps each possible initial CV state to its corresponding final CV state when the initial CR state is~$\rho_S$. Furthermore, as stated earlier, the D-CTC model enforces a self consistency condition so as to avoid ``grandfather-like'' causality paradoxes. This requires that an arbitrary state $\sigma_C$ of the CV system is a fixed point of the quantum channel $\mathcal{N}_{V,\rho}$, i.e.,
\begin{equation}\label{eq:self-consistency-condition}
 \mathcal{N}_{V,\rho}(\sigma_C) = 
 \sigma_C.
\end{equation}

A solution $\sigma_C$ to~\eqref{eq:self-consistency-condition} always exists, although it is not necessarily unique~\cite{Deutsch}. The evolution of the CR system is represented by
\begin{equation} 
    \mathcal{M}_{V}: \rho_S \mapsto 
    \operatorname{Tr}_{C} \big\{V_{SC}(\rho_S\otimes\sigma_C)V_{SC}^{\dagger}\big\},
\end{equation}
which maps every possible initial CR state to its corresponding final CR state whenever the CV state is $\sigma_C$. 
Note that $\mathcal{M}_{V}(\rho_{S})$ depends not only on $\rho_{S}$, but also on $\sigma_C$, whose dependence on $\rho_{S}$ is given by~\eqref{eq:self-consistency-condition}. Therefore, $\mathcal{M}_{V}$ is not a linear map. The nonlinear nature of this evolution leads to the host of interesting properties mentioned earlier in Section~\ref{sec:introduction}.

\section{CTC-Inspired State Discrimination Circuit} \label{subsec:prelim-CTC-description}

Before we provide our CTC-inspired state discrimination circuit, we recall how to discriminate multiple non-orthogonal pure states using a D-CTC~\cite{Heisenberg-Holevo-Violation-BHW}. The set of states to be discriminated is $\{|\psi_i\rangle\!\langle\psi_i|\}_{i=0}^{N-1}$, and at the end of the D-CTC-assisted interaction, one may perform a measurement in the orthonormal basis $\{|i\rangle\}_{i=0}^{N-1}$. To do so, we require a set of unitaries $\{U_i\}_{i=0}^{N-1}$ such that 
$U_i|\psi_i\rangle = |i\rangle$ and $\langle j|U_i|\psi_j\rangle \neq 0$ for all $0\leq i, j\leq N-1, i \neq j$. Such a set of unitaries exists for every set $\{|\psi_i\rangle\!\langle\psi_i|\}_{i=0}^{N-1}$ of states \cite{Heisenberg-Holevo-Violation-BHW}. Let the interaction unitary $V_{SC}$ of the D-CTC be 
\begin{equation} \label{eq:brun-interaction-unitary}
V_{SC} = \left(\sum_{i=0}^{N-1} 
|i\rangle\!\langle i|_S \otimes ({U_i})_C \right)\circ\operatorname{SWAP},
\end{equation}
where SWAP is the unitary operator that swaps systems~$S$ and $C$.
The circuit representation of this unitary is shown in Figure~\ref{fig:ctc-figure}. The authors of \cite{Heisenberg-Holevo-Violation-BHW} demonstrated that $\sigma_C=|i\rangle\!\langle i |$ is the unique, self-consistent solution to \eqref{eq:self-consistency-condition} whenever $\rho_S =|\psi_i\rangle\!\langle\psi_i|$. That is, the D-CTC-assisted circuit can be used to map non-orthogonal states to distinct orthogonal basis states, and hence one can perfectly discriminate the non-orthogonal states in question.

To outline the functioning of the circuit, we briefly explain why each $\sigma_C = |i\rangle\!\langle i|$, for $0 \leq i \leq N -1 $, is a fixed point of $\mathcal{N}_{V,\rho}$, i.e., a solution to \eqref{eq:self-consistency-condition}. Suppose that $\rho_S= |\psi_i\rangle\!\langle \psi_i|$ and $\sigma_C = |i\rangle\!\langle i|$. The SWAP gate acts first and transforms the CR system to the state $|i\rangle\!\langle i|$. Next, the $U_i$ unitary is triggered. The unitary~$U_i$ acts on $|\psi_i\rangle\!\langle \psi_i|$, which leads to the self-consistency condition $\mathcal{N}_{V,\rho}(\sigma_C) =|i\rangle\!\langle i|$. Now, if one performs a measurement in the basis $\{|i\rangle\}_i$ on the final CR state, one may determine $\rho_S$ with certainty. Measurement outcome $j$ corresponds to the initial state $\rho_S = |\psi_j\rangle\!\langle \psi_j|$. Thus, using the D-CTC, in principle, we are able to discriminate perfectly the possibly non-orthogonal states $\{|\psi_i\rangle\!\langle\psi_i|\}_{i=0}^{N-1}$. 

\begin{figure} 
\includegraphics[width=\linewidth]{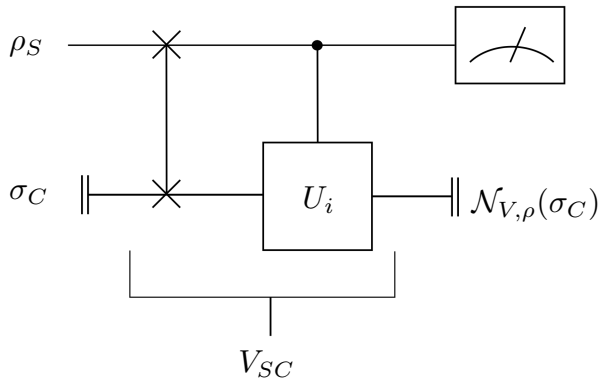}
\caption{The upper system is the CR system, and the lower system is the CV system. The past and future mouths of the wormhole are represented by the double bars on the left and right. The CR and CV systems interact according to the unitary $V_{SC}$, defined in \eqref{eq:brun-interaction-unitary}, before the CV system enters the future mouth of the wormhole.}
\label{fig:ctc-figure}
\end{figure}

The construction outlined above works if we have access to a D-CTC. In its absence, we can only construct iterative circuits that approximate its behavior. Our major contribution is one such circuit; i.e., we combine the CTC-assisted state discrimination scheme described above with a version of a quantum circuit given by \cite{BW-Simulations-of-CTCs} that simulates the behavior of a D-CTC. The circuit consists of multiple copies of three registers whose $n$th copies we label $G_n$, $S_n$, and $C_n$ (Figure~\ref{fig:dctc-circuit-simulation}). We will initialize each of the $G_n$ registers to the state vector $\sqrt{\gamma} \ket{0}  + \sqrt{1-\gamma} \ket{1} $. We find, however, that for our purposes, it suffices to set $\gamma = 0$. Each of the $S_n$ registers is initialized to the initial CR state $\rho_S$, and the $C_0$ register is initialized to a state $\omega$.
At every step of the circuit, the $G_n$, $S_n$ and $C_n$ systems interact via the controlled unitary
\begin{equation}
|0\rangle\!\langle 0|_{G_n}\otimes I_{S_n C_n}  + |1\rangle\!\langle 1|_{G_n}\otimes V_{S_n C_n}, \label{eq:controlled-unitary-defn}
\end{equation}
where $V_{S_n C_n}$ is the interaction unitary in \eqref{eq:brun-interaction-unitary} acting on the $S_n$ and $C_n$ registers. In other words, the procedure is the following: 
\begin{enumerate}
    \item Apply the controlled unitary between the $G_n$, $S_n$, and $C_n$ registers,
    \item discard the $S_n$ and $G_n$ registers, and
    \item load the resulting state of the $C_n$ system into the $C_{n+1}$ register, increment $n$ by one, and repeat.
\end{enumerate}

By applying this procedure, for every $\rho_S$, the state of the $C_n$ register converges to the fixed point of $\mathcal{N}_{V,\rho}$ as $n\rightarrow \infty$ \cite{BW-Simulations-of-CTCs}. Note that the rate of convergence is dependent on $V_{SC}$ and the state $\omega$ to which $C_0$ is initialized. We will discuss how to optimize the rate of convergence with respect to them. 

\begin{figure}
\includegraphics[width=\linewidth]{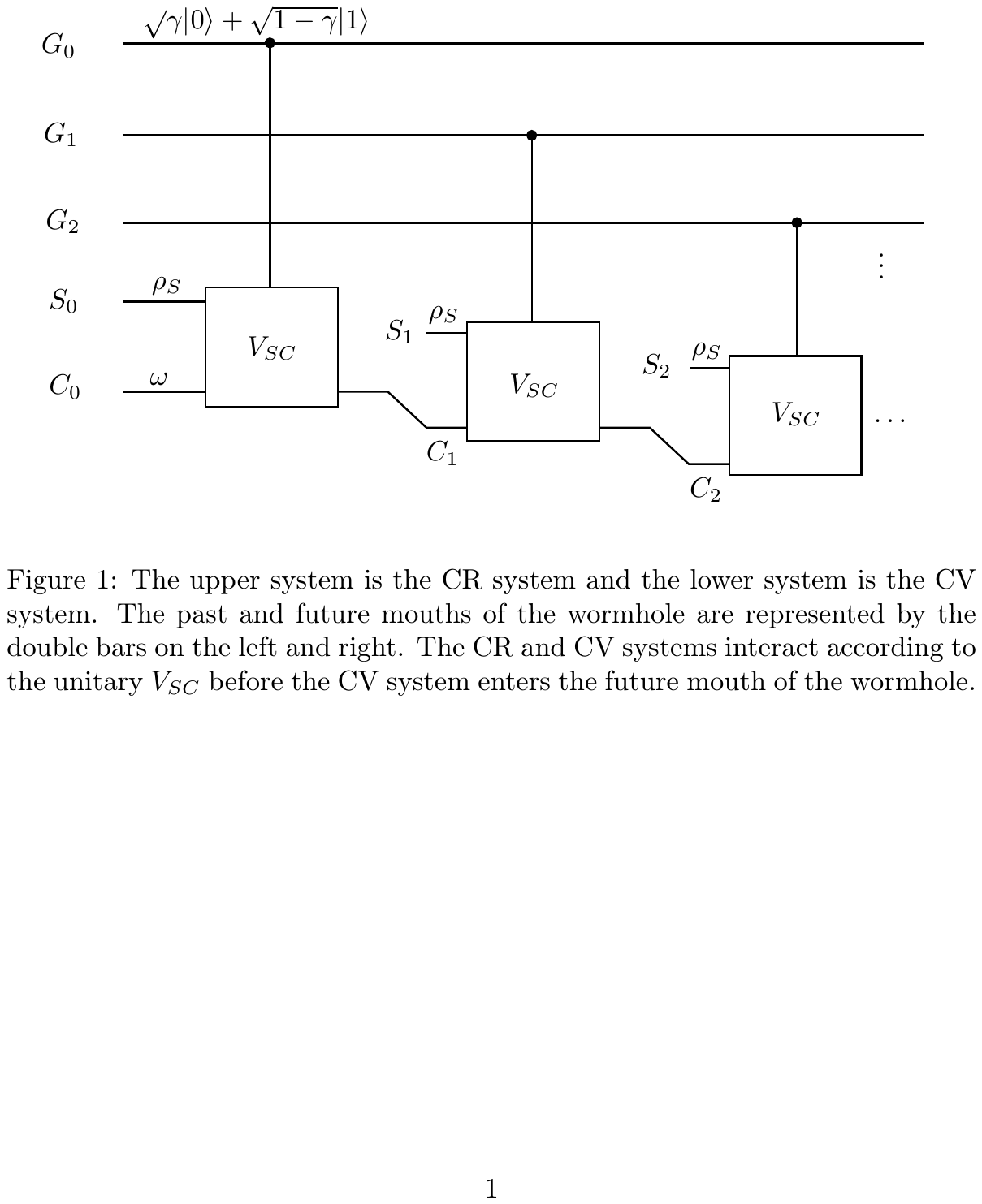}
\caption{This circuit simulates a D-CTC with interaction unitary $V_{SC}$. This figure shows three rounds of interaction. As the number of rounds $n$ increases, the simulation approaches the behavior of a D-CTC.}
\label{fig:dctc-circuit-simulation}
\end{figure}

To use this circuit, which is also shown in Figure~\ref{fig:dctc-circuit-simulation}, in order to discriminate the set $\{|\psi_i\rangle\!\langle\psi_i|\}_i$ of states, the interaction unitary $V_{SC}$ is set to the one in \eqref{eq:brun-interaction-unitary} and the input state~$\rho_S$ is restricted to belong to the set $\{|\psi_i\rangle\!\langle\psi_i|\}_i$. Since the state of the $C_n$ register converges to the fixed point of $\mathcal{N}_{V,\rho}$, the $C_n$ register  converges to the state $|j\rangle\!\langle j|$ if $\rho_S = |\psi_j\rangle\!\langle \psi_j|$ \cite{Heisenberg-Holevo-Violation-BHW}. Therefore, by implementing a standard basis measurement on the $C_n$ register after a suitably chosen $n$, we have a method for approximately discriminating the states in the set~$\{|\psi_i\rangle\!\langle\psi_i|\}_i$.


\begin{figure}[ht!]
\centering
\subfloat[\label{fig:adaptive-circuit-1}]{\includegraphics[width=\linewidth]{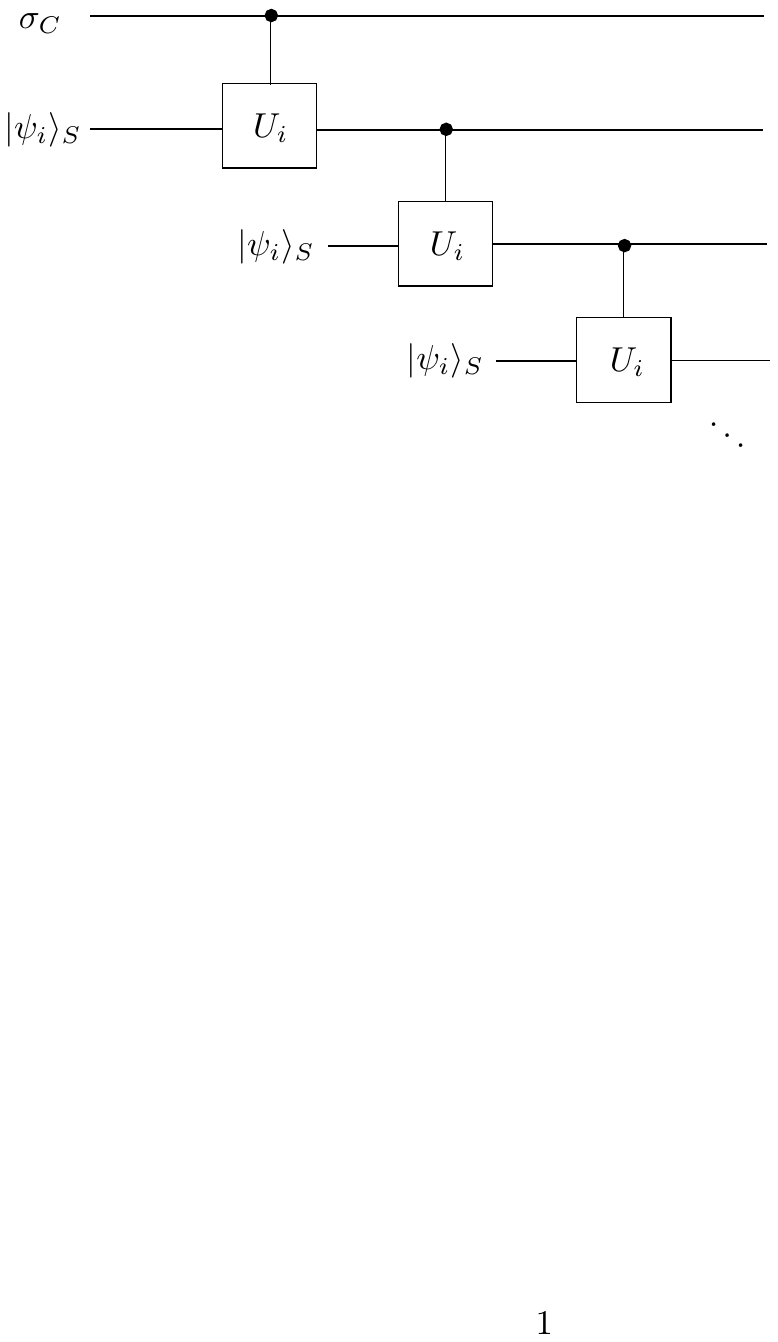}} \\

\subfloat[\label{fig:adaptive-circuit-2}]{\includegraphics[width=\linewidth]{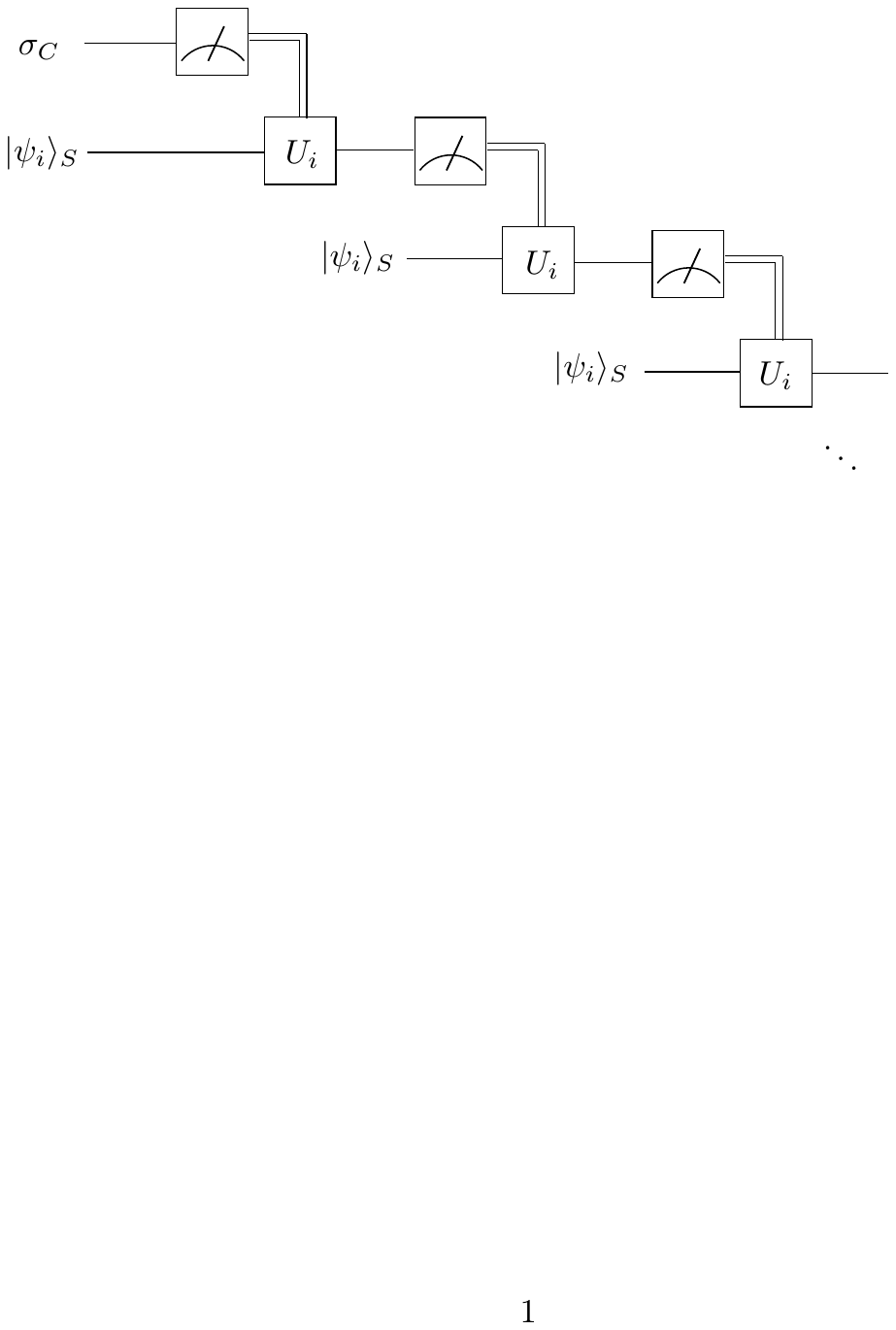}} \\

\subfloat[\label{fig:adaptive-circuit-3}]{\includegraphics[width=\linewidth]{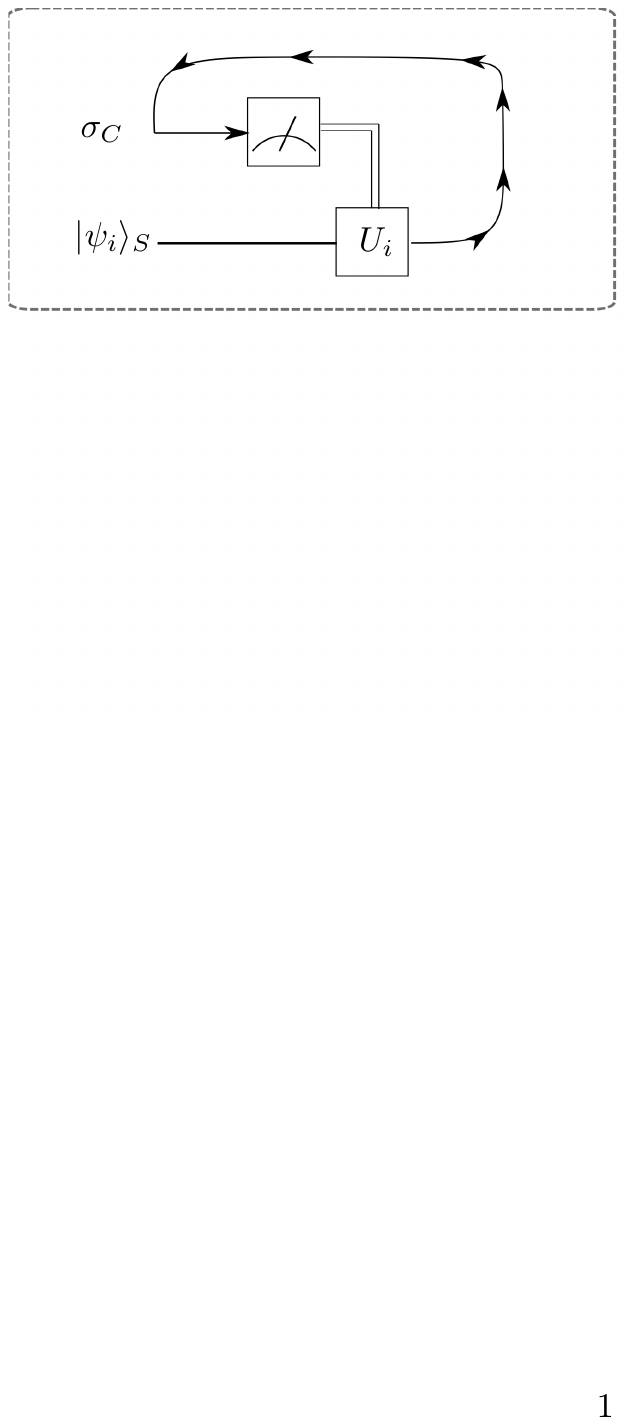}} \\

\caption{A series of simplifications of the original state discrimination circuit in Figure~\ref{fig:dctc-circuit-simulation} that ultimately leads to the simple iterative circuit in Figure~\ref{fig:adaptive-circuit-3}. First, we explicitly expand out $V_{SC}$ and set the state of the $G_n$ registers to $|1\rangle\! \langle 1|$, as this is optimal for performing our state discrimination scheme. This leads to the circuit in Figure~\ref{fig:adaptive-circuit-1}. We also note that the controlled-$U_i$ gate is short for the gate $\sum_i |i\rangle\! \langle i| \otimes U_i$. Further, as the controlling registers are effectively classical, the circuit can be simplified to Figure~\ref{fig:adaptive-circuit-2}. Finally, the iterative circuit is rewritten in the compact form of Figure~\ref{fig:adaptive-circuit-3}. The arrows do not depict a time-travel loop, but instead depict that the output of the unitary $U_i$ at one time interval is fed as input to the $C$ system at the next time interval.}
\label{fig:adaptive-figure-group}
\end{figure}

One of our major contributions is an equivalent recasting of our state discrimination circuit as a local, adaptive circuit. The adaptive circuit, which we describe below, consists of repeated iterations of a fixed two-register protocol and lends itself directly to experimental implementation. Another benefit of the adaptive circuit is that it is not necessary to update a quantum memory throughout the length of the circuit. The only information necessary to store in a quantum memory are the $n$ copies of the unknown state $|\psi_i\rangle$. 

We now explain how the original circuit in Figure~\ref{fig:dctc-circuit-simulation} can be recast in the simple, adaptive form of Figure~\ref{fig:adaptive-circuit-3}. To perform the state discrimination in an adaptive manner, we first assume that we have $n$ copies of the unknown state $\ket{\psi_i}$. By expanding out the $V_{SC}$ unitaries defined in~\eqref{eq:brun-interaction-unitary} and by applying the fact that the state~$\ket{1}$ suffices for each of the $G_n$ registers in Figure~\ref{fig:dctc-circuit-simulation}, it can be simplified to the circuit in Figure~\ref{fig:adaptive-circuit-1}. We then note that since the various $S$ registers are traced out at the end of the circuit, the coherent controls in Figure~\ref{fig:adaptive-circuit-1} can be equivalently replaced with classical controls. That further means that the classical control is effectively implemented by first performing a measurement and then choosing the corresponding $U_i$ gate using the measurement outcome. What we are applying here is the well known principle of deferred measurement \cite{book2000mikeandike}. This enables a further simplification of the circuit to that in Figure~\ref{fig:adaptive-circuit-2}.

Finally, we note that the circuit in Figure~\ref{fig:adaptive-circuit-2} consists of iterations of a single atomic unit, which we denote concisely in Figure~\ref{fig:adaptive-circuit-3}. This depiction of our state discrimination scheme is of particular import: to perform our state discrimination circuit, one only needs two quantum registers, the ability to perform a standard basis measurement, and classical control. This makes it directly amenable to experimental implementation.

\section{Error Analysis of the State Discrimination Circuit} \label{sec:error-analysis}

Here, we define the average probability of error in our state discrimination scheme. Next we demonstrate how it decays with $n$, the number of iterations of the state discrimination circuit in Figure~\ref{fig:dctc-circuit-simulation}. Finally, we show how the error probability exponentially converges to zero in the asymptotic limit.

\subsection{Calculating the Error Probability}

As earlier, we assume that we are given a set $\{|\psi_i\rangle\}_{i=0}^{N-1}$ of states, a set $\{|i\rangle \}_{i=0}^{N-1}$ of basis states and a set $\{U_i\}_{i=0}^{N-1}$ of unitaries such that $U_i|\psi_i\rangle = |i\rangle$ for $0 \leq i \leq N -1$. We assume that the states $\{|\psi_i\rangle\}_{i=0}^{N-1}$ are to be discriminated using the circuit shown in Figure $\ref{fig:dctc-circuit-simulation}$ with
\begin{equation}
 V_{SC} \coloneqq  \left(\sum_{i=0}^{N-1} |i\rangle\!\langle i| \otimes U_i\right)\circ \operatorname{SWAP}.   
\end{equation}
First, we provide an expression for the average probability of incorrectly discriminating the set of states after $n$ iterations of the circuit. 

\begin{definition}
For each $0\leq k \leq N-1$, let $\sigma_{n, k}$ be the state of the $C_n$ system when $\rho_S = |\psi_k\rangle\!\langle\psi_k|$. (That is, $\sigma_{n, k}$ is the state obtained after $n$ rounds of executing the circuit in Figure \ref{fig:dctc-circuit-simulation} with $\rho_S$ set equal to $|\psi_k\rangle\!\langle \psi_k|$.) We define the average probability of error $p_e^{(n)}$ and the average probability of success $p_s^{(n)}$ after $n$ iterations as follows:
\begin{align}
p^{(n)}_e &\coloneqq  \sum_{k=0}^{N-1} \sum_{j\neq k} p_k \operatorname{Tr}\big\{|j\rangle\!\langle j|\sigma_{n, k}\big\}, \\
p^{(n)}_s &\coloneqq  \sum_{k=0}^{N-1} p_k \operatorname{Tr}\big\{|k\rangle\!\langle k|
\sigma_{n, k}\big\}. 
\end{align}
        
\end{definition}

Our goal now is to quantitatively describe the behavior of $p^{(n)}_e$ with the number $n$ of iterations. We will find that the functioning of our circuit, particularly the error and success probabilities, possesses a Markov property. To obtain this property,  
for each $0\leq k\leq N-1$, we define the $N \times N$ stochastic matrix $P_k$ to be
\begin{equation} \label{eq:pk-definition}
P_k \coloneqq 
\begin{pmatrix}
|\langle 0 |U_0|\psi_k\rangle|^2 & \dots & |\langle 0 |U_{N-1}|\psi_k\rangle|^2 \\

\vdots & \ddots & \vdots \\

|\langle N-1 |U_0|\psi_k\rangle|^2 & \dots & |\langle N-1 |U_{N-1}|\psi_k\rangle|^2
\end{pmatrix}.
\end{equation}

Also, we define, for $0 \leq k \leq N - 1$, the $N \times 1$ column vector $u_k^{(n)}$ such that its $i$th element is 
\begin{equation}
\operatorname{Tr}\big\{|i-1\rangle\!\langle i-1| \sigma_{n, k}\}.    
\end{equation}
Define $u^{(0)}$ to be $u_k^{(0)}$ for any $0\leq k \leq N-1$. The vector $u^{(0)}$ is well defined since $\rho_{0,k} = \omega$ for all $0\leq k\leq N-1$, where $\omega$ is defined just before~\eqref{eq:controlled-unitary-defn}.

\begin{proposition}\label{prob-formula}
The average probability of error and success after $n$ iterations, denoted $p_e^{(n)}$ and $p_s^{(n)}$, are respectively given by
\begin{align}
p^{(n)}_e & =\sum_{k=0}^{N-1} \sum_{\substack{j=0, \\ j \neq k}}^{N-1} p_k e_{j+1}^T P_k^n u^{(0)} \label{eq:probability-error}, \\
p^{(n)}_s & = \sum_{k=0}^{N-1} p_k e_{k+1}^T P_k^n u^{(0)}, \label{eq:probability-success}
\end{align}
where $e_j$ is the standard basis column vector with a one in the $j$th row and all other elements set to zero.
\end{proposition}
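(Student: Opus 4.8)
The plan is to reduce the quantum evolution of the CV register to a classical Markov chain on its standard-basis populations, and then to recognize $P_k$ as the transition matrix of that chain. First I would write out the one-step map explicitly. With $\gamma=0$ (so each $G_n$ is in $|1\rangle\!\langle 1|$) and with $S_n,G_n$ traced out at each round, the map sending $\sigma_{n,k}$ to $\sigma_{n+1,k}$ is precisely the CTC channel $\mathcal{N}_{V,\rho}$ of \eqref{eq:CTC-map-definition} with $\rho_S=|\psi_k\rangle\!\langle\psi_k|$. Applying the SWAP first, then $\sum_i |i\rangle\!\langle i|\otimes U_i$, and finally tracing out $S$, the cross terms $i\neq i'$ vanish under the partial trace and one is left with
\begin{equation}
\mathcal{N}_{V,\rho_k}(\sigma) = \sum_{i=0}^{N-1}\langle i|\sigma|i\rangle\, U_i|\psi_k\rangle\!\langle\psi_k|U_i^\dagger .
\end{equation}

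The crucial structural feature is that the right-hand side depends on $\sigma$ only through its diagonal entries $\langle i|\sigma|i\rangle$, so the off-diagonal coherences never re-enter the dynamics and the evolution of the population vector $u_k^{(n)}$ closes on itself. Taking $\langle j|\cdot|j\rangle$ of the display gives $\langle j|\sigma_{n+1,k}|j\rangle=\sum_i |\langle j|U_i|\psi_k\rangle|^2\,\langle i|\sigma_{n,k}|i\rangle$, which is exactly the recursion $u_k^{(n+1)}=P_k\,u_k^{(n)}$ with $P_k$ as in \eqref{eq:pk-definition}. Along the way I would note that $P_k$ is (column-)stochastic, since $\sum_j |\langle j|U_i|\psi_k\rangle|^2 = \langle\psi_k|U_i^\dagger U_i|\psi_k\rangle = 1$, which is consistent with trace preservation of the channel.

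Then I would use the common initialization: every $S_n$ register starts in $|\psi_k\rangle\!\langle\psi_k|$ and $C_0$ starts in $\omega$ independently of $k$, so $\sigma_{0,k}=\omega$ and hence $u_k^{(0)}=u^{(0)}$ for all $k$. A one-line induction on the recursion then yields $u_k^{(n)}=P_k^{\,n}\,u^{(0)}$. Finally I would substitute this into the definitions of $p_e^{(n)}$ and $p_s^{(n)}$: since $\operatorname{Tr}\{|i-1\rangle\!\langle i-1|\sigma_{n,k}\}$ is by definition the $i$th entry of $u_k^{(n)}$, extracting the $j$th (respectively $k$th) population is precisely $e_{j+1}^T u_k^{(n)}=e_{j+1}^T P_k^{\,n} u^{(0)}$, which produces \eqref{eq:probability-error} and \eqref{eq:probability-success}.

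The main obstacle is really just the first step, namely obtaining the channel in the clean diagonal form, because one must track the SWAP and the partial trace carefully and confirm that the off-diagonal coherences of $\sigma_{n,k}$ genuinely drop out. Once the observation that ``only the diagonal survives'' is in hand, the Markov recursion, the induction, and the final substitution are all routine.
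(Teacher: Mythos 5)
Your proposal is correct and follows essentially the same route as the paper: both derive the diagonal-form action of the channel $\mathcal{N}_{V,\rho}$ (the paper's Eq.~\eqref{eq:probability-measurement-outcome} is exactly your computation that the cross terms vanish under the partial trace over $S$), obtain the Markov recursion $u_k^{(n+1)}=P_k u_k^{(n)}$, iterate to get $u_k^{(n)}=P_k^n u^{(0)}$, and substitute into the definitions of $p_e^{(n)}$ and $p_s^{(n)}$. Your added remarks on column-stochasticity of $P_k$ and on the well-definedness of $u^{(0)}$ via $\sigma_{0,k}=\omega$ are consistent with, and slightly more explicit than, the paper's presentation.
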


\begin{proof}
The probability of measuring outcome $j$ on the $(n+1)$-th iteration of the circuit, assuming the unitary $V_{SC}$ always acts on the $S_n$ and $C_n$ registers, is
\begin{equation} \label{eq:probability-measurement-outcome}
\begin{split}
& \operatorname{Tr}\big\{|j\rangle\!\langle j|
\mathcal{N}_{V,\rho}(
\sigma_{n, k})\big\} \\
&  = 
\operatorname{Tr}\bigg\{|j\rangle\!\langle j|
\left(\operatorname{Tr}_{S} \big\{V_{SC} (|\psi_k\rangle\!\langle\psi_k|\otimes
\sigma_{n, k})V_{SC}^{\dagger}\big\}\right)\bigg\} \\
& = 
\operatorname{Tr}\bigg\{|j\rangle\!\langle j| \left(\sum_{l=0}^{N-1}
\operatorname{Tr} \big\{|l\rangle\!\langle l| \sigma_{n, k}\big\}
U_l|\psi_k\rangle\!\langle\psi_k|U_l^{\dagger}\right)\bigg\}\\
& =
\sum_{l=0}^{N-1}
\operatorname{Tr} \big\{|l\rangle\!\langle l| 
\sigma_{n, k}\big\}
|\langle j|
U_l|\psi_k\rangle|^2,
\end{split}
\end{equation}
where we used the definition of $\mathcal{N}_{V,\rho}$ given in~\eqref{eq:CTC-map-definition} in the first equality and the definition of $V_{SC}$ given in~\eqref{eq:brun-interaction-unitary} (including the SWAP operation) in the second equality.

It follows from \eqref{eq:probability-measurement-outcome} that $u_k^{(n+1)} = P_k u_k^{(n)}$. From this, we conclude that 
\begin{equation} \label{eq:uk-relation-to-u0}
 u_k^{(n)} = P_k^n u^{(0)}. 
\end{equation}
Since, $U_k|\psi_k\rangle = |k\rangle$, the ($k+1$)-th column of $P_k$ consists of zeroes except for a one in the $(k+1)$-th row.

Hence, we have 
\begin{align}
    p_e^{(n)} &= \sum_{k=0}^{N-1}\sum_{\substack{j = 0 : j\neq k}}^{N-1} p_k \operatorname{Tr} \{|j\rangle\!\langle j |\sigma_{n, k}\} \\
    &= \sum_{k=0}^{N-1} \sum_{\substack{j = 0 : j\neq k}}^{N-1} p_k e_{j+1}^T u_k^{(n)} \\
    &=  \sum_{k=0}^{N-1} \sum_{\substack{j = 0 : j\neq k}}^{N-1} p_k e_{j+1}^T P_k^n u^{(0)}.
\end{align}
In the above, the first equality arises due to the definition of $p_e^{(n)}$. The second equality is due to the definition of $u_k^{(n)}$, and the final equality is due to \eqref{eq:uk-relation-to-u0}. We also have that
\begin{align}
    p^{(n)}_s &= \sum_{k=0}^{N-1} p_k \operatorname{Tr}\big\{|k\rangle\!\langle k|
\sigma_{n, k}\big\} \\
&= \sum_{k=0}^{N-1} p_k e_{k+1}^T u_k^{(n)} \\
&= \sum_{k=0}^{N-1} p_k e_{k+1}^T P_k^n u^{(0)}.
\end{align}
This concludes the proof.
\end{proof}

The functioning of our quantum circuit has a Markov property, which we explain here. Recall from earlier that the state discrimination circuit in Figure~\ref{fig:dctc-circuit-simulation} can be implemented as an adaptive quantum state discrimination scheme (Figure~\ref{fig:adaptive-figure-group}). The adaptive scheme is as follows: one starts with $n$ copies of $\rho_S$. If one obtains outcome $k$ after measuring the $n$th copy of $\rho_S$, then one applies the unitary $U_k$ to the $(n+1)$-th copy of $\rho_S$, and repeats the procedure. Suppose that we have $\rho_S = |\psi_k\rangle\!\langle \psi_k|$ and that we obtain outcome $i$ after measuring the $n$th copy. The probability of obtaining the outcome $j$ after measuring the $(n+1)$-th copy is $|\langle j|U_i|\psi_k\rangle|^2$. That is, the probability of obtaining each successive measurement outcome given the previous measurement outcome depends only on the previous measurement outcome. Thus, we may view each successive measurement outcome as an element of a Markov chain with transition matrix $P_k$ as defined in $\eqref{eq:pk-definition}$.

We now state and prove a simple result that allows us to maximize the probability of success after $n$ iterations with respect to $\omega$, the initial state of the $C_0$ register.
\begin{proposition}\label{success-probability-omega}
For each $0\leq i \leq N-1$, let $p_{s,i}^{(n)}$ be the average probability of success after $n$ iterations given that $\omega = |i\rangle\!\langle i|$. Then the average probability of success (for arbitrary $\omega$) is 
\begin{equation} \label{eq:probability-success-average}
p_s^{(n)} = \sum_{i=0}^{N-1} \operatorname{Tr}\big\{|i\rangle\!\langle i|\omega\big\}p_{s,i}^{(n)}.
\end{equation}
\end{proposition}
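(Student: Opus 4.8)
The plan is to exploit the explicit formula for $p_s^{(n)}$ established in Proposition~\ref{prob-formula}, namely $p_s^{(n)} = \sum_{k=0}^{N-1} p_k\, e_{k+1}^T P_k^n u^{(0)}$, and to observe that the only place where the initial state $\omega$ enters this expression is through the vector $u^{(0)}$. Since $\sigma_{0,k} = \omega$ for every $k$, the $j$th entry of $u^{(0)}$ is $\operatorname{Tr}\{|j-1\rangle\!\langle j-1|\omega\}$, so $u^{(0)}$ is manifestly a linear function of the diagonal entries of $\omega$ in the standard basis. This linearity is the whole engine of the proof.

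First I would record the decomposition $u^{(0)} = \sum_{i=0}^{N-1} \operatorname{Tr}\{|i\rangle\!\langle i|\omega\}\, e_{i+1}$, which follows by comparing the $j$th component of both sides: the right-hand side contributes $\operatorname{Tr}\{|j-1\rangle\!\langle j-1|\omega\}$, matching the definition of $u^{(0)}$. Next I would specialize to $\omega = |i\rangle\!\langle i|$ and note that in this case the $j$th entry of $u^{(0)}$ is $|\langle j-1|i\rangle|^2 = \delta_{j,i+1}$, so that $u^{(0)} = e_{i+1}$; consequently the success-probability formula yields $p_{s,i}^{(n)} = \sum_{k=0}^{N-1} p_k\, e_{k+1}^T P_k^n e_{i+1}$.

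Finally I would substitute the decomposition of $u^{(0)}$ into the formula for $p_s^{(n)}$ and interchange the (finite) sums over $i$ and $k$, giving $p_s^{(n)} = \sum_{i=0}^{N-1} \operatorname{Tr}\{|i\rangle\!\langle i|\omega\} \big(\sum_{k=0}^{N-1} p_k\, e_{k+1}^T P_k^n e_{i+1}\big)$. Recognizing the inner parenthesized sum as exactly the $p_{s,i}^{(n)}$ computed in the previous step completes the argument. There is no genuinely hard step here---the statement is pure linearity of $p_s^{(n)}$ in $u^{(0)}$ together with the fact that $u^{(0)}$ reads off only the populations $\operatorname{Tr}\{|i\rangle\!\langle i|\omega\}$---so the one thing to watch is the bookkeeping of the one-based indexing of the vectors $e_{i+1}$ against the zero-based labeling of the basis states $|i\rangle$, which must be kept consistent throughout.
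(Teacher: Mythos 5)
Your proof is correct and follows essentially the same route as the paper's: both decompose $u^{(0)} = \sum_i \operatorname{Tr}\{|i\rangle\!\langle i|\omega\}\,e_{i+1}$, substitute into the formula of Proposition~\ref{prob-formula}, swap the finite sums, and identify the inner sum as $p_{s,i}^{(n)}$. If anything, your step explicitly verifying that $\omega = |i\rangle\!\langle i|$ gives $u^{(0)} = e_{i+1}$ makes the final identification slightly more transparent than the paper's version.
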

\begin{proof}
It follows from Proposition~\ref{prob-formula} that 
\begin{align}\label{eq:recursive-prob-formula}
    p_{s}^{(n)} & = \sum_{k=0}^{N-1} p_k e_{k+1}^T P_k^n \bigg(\sum_{i=0}^{N-1} \operatorname{Tr} \{|i\rangle\!\langle i| \omega\} e_{i+1}\bigg) \\
    & = \sum_{i=0}^{N-1}\operatorname{Tr} \{|i\rangle\!\langle i|\omega \} \bigg(\sum_{k=0}^{N-1} p_k e_{k+1}^T P_k^ne_{i+1}\bigg) \\
    & = 
    \sum_{i=0}^{N-1} \operatorname{Tr} \{|i\rangle\!\langle i|\omega\} p_{s,i}^{(n)}.
\end{align}
The first equality arises due to Proposition~\ref{prob-formula} and the definition of $u^{(0)}$. The second equality arises due to algebraic manipulation, and the final equality follows from identifying that $p_{s,i}^{(n)} = \sum_{k=0}^{N-1} p_k e_{k+1}^T P_k^ne_{i+1}$.
\end{proof}

It follows from Proposition \ref{success-probability-omega} that the success probability after $n$ iterations is maximized if $\omega$ is simply a basis state of the form $|j\rangle\!\langle j|$. The particular optimal basis state is dependent on the value of $j$ that maximizes~$p_{s, j}^{(n)}$. 

\subsection{Asymptotic Analysis of Error Probability}

We now consider how the average probability of error decays with $n$. First we recall that each $P_k$ is the transition matrix of the Markov chain of successive measurement outcomes. We are interested in the equilibrium state, or steady state, of this Markov chain. It is a standard fact in Markov chain theory that the rate of convergence is determined by the second largest eigenvalue of the transition matrix. To utilize this fact, for $0 \leq k \leq N -1 $, we construct the $(N-1) \times (N-1)$ matrix $Q_k$ by deleting the $(k+1)$-th row and the $(k+1)$-th column of $P_k$. We will also construct, for $0 \leq k \leq N -1 $, column vectors $v_k^{(n)}$ by deleting the $(k+1)$-th entry in $u^{(n)}_k$. The vector $v^{(0)}$ is similarly constructed from $u^{(0)}$. This construction enables us to write the expression for $p_e^{(n)}$ in a more useful way. 

\begin{proposition}
After $n$ iterations of the state discrimination circuit, the average probability of error is given by
\begin{equation}
    p_e^{(n)} = \sum_{k=0}^{N-1} \sum_{j=1}^{N-1} p_k e_j^T Q_k^n v^{(0)}.
\end{equation}
\end{proposition}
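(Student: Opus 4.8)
The plan is to start from the exact expression for $p_e^{(n)}$ established in Proposition~\ref{prob-formula},
\[
p_e^{(n)} = \sum_{k=0}^{N-1} \sum_{\substack{j=0 \\ j\neq k}}^{N-1} p_k\, e_{j+1}^T P_k^n\, u^{(0)},
\]
and to collapse the $N\times N$ dynamics driven by $P_k$ down to the $(N-1)\times(N-1)$ dynamics driven by $Q_k$. The structural fact that makes this possible was already noted after~\eqref{eq:uk-relation-to-u0}: since $U_k|\psi_k\rangle = |k\rangle$ gives $|\langle i|U_k|\psi_k\rangle|^2 = \delta_{ik}$, the $(k+1)$-th column of $P_k$ is exactly $e_{k+1}$. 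In Markov-chain language, outcome $k$ is an absorbing state of $P_k$; for the computation what matters is that no probability ever flows out of the $(k+1)$-th coordinate back into the remaining $N-1$ coordinates.

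First I would prove, by induction on $n$, that $v_k^{(n)} = Q_k^n\, v^{(0)}$, where $v_k^{(n)}$ is the vector $u_k^{(n)} = P_k^n u^{(0)}$ with its $(k+1)$-th entry deleted. For any row index $i \neq k+1$, expanding $(u_k^{(n)})_i = \sum_{l} (P_k)_{il}\,(u_k^{(n-1)})_l$ and using $(P_k)_{i,k+1} = \delta_{i,k+1} = 0$, the $l=k+1$ contribution vanishes, so $(u_k^{(n)})_i$ is built only from the transient coordinates of $u_k^{(n-1)}$ and precisely through the entries of $Q_k$. This gives the recursion $v_k^{(n)} = Q_k\, v_k^{(n-1)}$, and since $u_k^{(0)} = u^{(0)}$ for every $k$ we may identify $v_k^{(0)} = v^{(0)}$, yielding $v_k^{(n)} = Q_k^n v^{(0)}$.

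With this recursion in place, the remaining work is bookkeeping. The inner sum $\sum_{j\neq k} e_{j+1}^T\, u_k^{(n)}$ adds up exactly those entries of $u_k^{(n)}$ other than the $(k+1)$-th, i.e.\ all entries of $v_k^{(n)}$, so it equals $\sum_{j=1}^{N-1} e_j^T v_k^{(n)} = \sum_{j=1}^{N-1} e_j^T Q_k^n v^{(0)}$, where the reduced-space basis vectors $e_j$, $1\le j\le N-1$, replace the $e_{j+1}$ with $j\neq k$. Reinstating the outer sum $\sum_k p_k$ then gives the claimed identity
\[
p_e^{(n)} = \sum_{k=0}^{N-1} \sum_{j=1}^{N-1} p_k\, e_j^T Q_k^n\, v^{(0)}.
\]

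I expect the only real obstacle to be notational rather than conceptual: one must check that deleting the $(k+1)$-th row and column commutes with taking powers, i.e.\ that the transient--transient block of $P_k^n$ is $Q_k^n$ rather than merely a submatrix of it. This is exactly where the absorbing-column identity $(P_k)_{\bullet,k+1} = e_{k+1}$ is indispensable, since it kills the cross term that would otherwise couple the absorbing coordinate back into the reduced dynamics. Carrying out the reduction through the entrywise induction above sidesteps an explicit block-triangular matrix-power argument and keeps the index bookkeeping transparent.
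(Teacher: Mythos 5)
Your proposal is correct and takes essentially the same route as the paper: both proofs start from Proposition~\ref{prob-formula}, establish the reduction $v_k^{(n)} = Q_k^n v^{(0)}$, and then reindex the double sum over $j \neq k$ as a sum over the $N-1$ coordinates of the reduced space. The only difference is one of thoroughness --- the paper asserts $v_k^{(n)} = Q_k^n v^{(0)}$ as a ``direct consequence'' of $u_k^{(n)} = P_k^n u^{(0)}$, whereas you justify it by the entrywise induction exploiting that $(P_k)_{i,k+1} = 0$ for $i \neq k+1$, which is exactly the absorbing-column fact the paper leaves implicit.
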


\begin{proof}
We indicate here that this is a rewriting of Proposition~\ref{prob-formula}, in which we make use of the definitions introduced directly above. For completeness, we provide details below.

First, we have as a direct consequence of \eqref{eq:recursive-prob-formula} that
\begin{equation}
    v_k^{(n)} = Q_k^n v^{(0)}.
\end{equation}
Hence we have
\begin{align} \label{eq:Qk-probability-of-error}
    p_e^{(n)} &= \sum_{k=0}^{N-1}\sum_{\substack{j = 0 \\ j\neq k}}^{N-1} p_k \operatorname{Tr} \{|j\rangle\!\langle j |\sigma_{n, k}\} \\
    &= \sum_{k=0}^{N-1} \sum_{j=1}^{N-1} p_k e_{j}^T v_k^{(n)} \\
    &=  \sum_{k=0}^{N-1} \sum_{j=1}^{N-1} p_k e_{j}^T Q_k^n v^{(0)}.
\end{align}
In the above, the first equality is due to the definition of $p_e^{(n)}$. The second equality is due to the definition of $v_k^{(n)}$. The final equality is due to the fact that $v_k^{(n)} =Q_k^n v^{(0)}$, which is a direct consequence of \eqref{eq:uk-relation-to-u0}.
\end{proof}


Before we state our next result that quantifies the rate of decay of $p_e^{(n)}$, we establish some notation that we will use to prove it. First, for each $0\leq k \leq N-1$, let $s_k$ denote the number of distinct eigenvalues of $Q_k$ and let $\lambda_{1, k}, \dots, \lambda_{s_k, k}$ be the distinct eigenvalues of $Q_k$. That is, each $Q_k$ has eigenvalues $\lambda_{1, k},  \lambda_{2, k}, \dots, \lambda_{s_k, k} $, and let $m_{i, k}$ denote the algebraic multiplicity of eigenvalue $\lambda_{i,k}$. Also, let
\begin{equation}
\tau \coloneqq  \max_{i,k} | \lambda_{i,k} |.
\label{eq:def-tau}
\end{equation}
That is, $\tau$ is the largest absolute value of the eigenvalues of all the $Q_k$ matrices taken together for $0 \leq k \leq N - 1$. 

\begin{proposition}\label{error-exponent-proposition}
The asymptotic error exponent of the state discrimination scheme outlined above is not smaller than the negative logarithm of $\tau$, defined in \eqref{eq:def-tau}. That is, the following inequality holds:
\begin{equation} \label{eq:error-probability-scaling}
\xi \coloneqq  \lim_{n\to\infty} \frac{-\ln{p_e^{(n)}}}{n} \geq 
-\ln{\tau}.
\end{equation}
\end{proposition}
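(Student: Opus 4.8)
The plan is to obtain the claimed lower bound on the error exponent $\xi$ by establishing a matching \emph{upper} bound on $p_e^{(n)}$ of the form $p_e^{(n)} \le g(n)\,\tau^n$, where $g$ grows only subexponentially (in fact polynomially) in $n$. Indeed, once such a bound is in hand, we have
\begin{equation}
\frac{-\ln p_e^{(n)}}{n} \ge \frac{-\ln g(n) - n\ln\tau}{n} = -\frac{\ln g(n)}{n} - \ln\tau ,
\end{equation}
and since $\ln g(n)/n \to 0$ for subexponential $g$, taking $n\to\infty$ yields $\xi \ge -\ln\tau$. (If the limit defining $\xi$ is not assumed a priori, this argument controls the $\liminf$, which suffices for the stated inequality.) I would begin from the expression $p_e^{(n)} = \sum_{k=0}^{N-1}\sum_{j=1}^{N-1} p_k\, e_j^T Q_k^n v^{(0)}$ established in \eqref{eq:Qk-probability-of-error}, which is a finite sum of nonnegative terms, so it is enough to bound each summand $e_j^T Q_k^n v^{(0)}$.

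The heart of the argument is to control the growth of $Q_k^n$ through the spectrum of $Q_k$. Using the spectral (Jordan) decomposition, for each $k$ one can write
\begin{equation}
Q_k^n = \sum_{i=1}^{s_k}\sum_{r=0}^{m_{i,k}-1} \binom{n}{r}\lambda_{i,k}^{\,n-r}\,(Q_k - \lambda_{i,k}I)^r Z_{i,k},
\end{equation}
where $Z_{i,k}$ is the spectral projector onto the generalized eigenspace of $\lambda_{i,k}$. Consequently each scalar $e_j^T Q_k^n v^{(0)}$ is a finite sum of terms of the form (polynomial in $n$ of degree at most $m_{i,k}-1$) times $\lambda_{i,k}^{\,n}$. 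Taking absolute values and using $|\lambda_{i,k}| \le \tau$ from the definition \eqref{eq:def-tau}, each such term is bounded in modulus by $c\, n^{m_{i,k}-1}\tau^{\,n-r} \le c'\, n^{d}\tau^n$, for constants $c,c'$ and $d \coloneqq \max_{i,k}\,(m_{i,k}-1)$; the finitely many factors $\tau^{-r}$ with $r\le d$ are harmless constants. Summing the finitely many terms over $i,r,j,k$ and absorbing the probabilities $p_k$ gives $p_e^{(n)} \le C\, n^{d}\tau^n$ for some constant $C \ge 0$, which is exactly the subexponential bound $g(n)=C n^{d}$ required above.

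I expect the main technical care to lie in the spectral-decomposition step, specifically in handling eigenvalues of $Q_k$ whose geometric and algebraic multiplicities differ: it is precisely the possible nontrivial Jordan structure that produces the polynomial prefactors $n^{m_{i,k}-1}$, and one must verify that these are dominated by $\tau^n$ in the limit (they are, since $\ln(n^{d})/n \to 0$). A cleaner alternative that avoids writing out Jordan blocks is to invoke Gelfand's spectral-radius formula, $\lim_{n\to\infty}\lVert Q_k^n\rVert^{1/n} = \rho(Q_k) \le \tau$: this gives, for every $\varepsilon>0$, a bound $p_e^{(n)} \le C_\varepsilon (\tau+\varepsilon)^n$ for all large $n$, whence $\liminf_n (-\ln p_e^{(n)})/n \ge -\ln(\tau+\varepsilon)$, and letting $\varepsilon \downarrow 0$ yields the claim. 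Finally I would dispose of the degenerate cases: if $\tau = 0$ every $Q_k$ is nilpotent, so $p_e^{(n)} = 0$ for large $n$ and $\xi = +\infty = -\ln\tau$; and whenever $p_e^{(n)}=0$ the corresponding term only strengthens the inequality, so these cases cause no difficulty.
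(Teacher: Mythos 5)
Your proposal is correct and takes essentially the same route as the paper: both expand $Q_k^n$ via the Jordan (spectral) decomposition of $Q_k$, observe that every term in $p_e^{(n)}$ is a polynomial in $n$ times $\lambda_{i,k}^{n}$ with $|\lambda_{i,k}|\le\tau$, and conclude $\xi \ge -\ln\tau$ after taking logarithms and the limit. Your version is somewhat tidier in that it packages this as the single bound $p_e^{(n)} \le C n^{d}\tau^{n}$ and explicitly disposes of the degenerate cases ($\tau=0$, $p_e^{(n)}=0$, and existence of the limit via $\liminf$), which the paper's proof glosses over, but the core argument is the same.
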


\begin{proof}
%
We begin by denoting the $m_{i,k}\times m_{i,k}$ Jordan block of $Q_k$ with eigenvalue $\lambda_{i,k}$ by $H_m(\lambda_{i,k})$, for each $1\leq i \leq s_k$. For each $0 \leq k \leq N-1$, define the $(N-1)\times (N-1)$ block-diagonal matrix 
\begin{gather}
J_k = 
\begin{pmatrix}
H_{m_{1, k}}(\lambda_{1,k}) & 0 & \dots & 0 \\
0 & H_{m_{2, k}}(\lambda_{2,k}) & \dots & 0 \\ 
\vdots& \vdots & \ddots & \vdots \\
0&0 & \dots & H_{m_{s_k, k}}(\lambda_{s_k, k}) 
\end{pmatrix}
\end{gather}
such that $Q_k = S_k J_k S_k^{-1}$ where $S_k$ is an invertible $(N-1)\times (N-1)$ matrix. Such a matrix exists since $J_k$ is the Jordan form of $Q_k$. Recall the expression for $p_e^{(n)}$ in ~\eqref{eq:Qk-probability-of-error}. We are interested in the matrix $Q_k^n$. Using the fact that $Q_k = S_k J_k S_k^{-1}$, we have that $Q_k^n = S_k J_k^n S_k^{-1}$. We then write, for $0\leq k \leq N-1$, $1\leq i \leq s_k$, the following \cite[p.~618]{Meyer}:
\begin{multline}\label{eq:Jordan-power}
 H_{m_{i, k}}^n (\lambda_{i,k}) = \\
 \begin{pmatrix}
\lambda_{i,k}^n & {\binom{n}{1}} \lambda_{i,k}^{n-1} & {\binom{n}{2}} \lambda_{i,k}^{n-2}  & \dots & {\binom{n}{m_{i,k} -1}} \lambda_{i,k}^{n-m_{i,k}+1} \\
 & \lambda_{i,k}^n & {\binom{n}{1}} \lambda_{i,k}^{n-1} 
 & \dots  & {\binom{n}{m_{i,k}-2}}\lambda_{i,k}^{n-m_{i,k}+2} \\
 & & \ \ \ddots & \ddots &  \vdots \\
 & \text{\huge0} & \ \ \ddots & \ \ \ddots & \vdots &  \\
 & & & \lambda_{i,k}^n & {\binom{n}{1}} \lambda_{i,k}^{n-1} \\
 & & & & \lambda_{i,k}^n 
 \end{pmatrix}
\end{multline}
where we have used the fact that, for $j>n$, ${\binom{n}{j}} = 0$.

Since for each integer $n$, we have
\begin{gather} 
J_k^n = 
\begin{pmatrix}
H_{m_{1, k}}^n(\lambda_{1,k}) & 0 & \dots & 0 \\
0 & H_{m_{2, k}}^n(\lambda_{2,k}) & \dots & 0 \\ 
\vdots& \vdots & \ddots & \vdots \\
0&0 & \dots & H_{m_{s_k, k}}^n(\lambda_{s_k, k}) 
\end{pmatrix},
\end{gather}
it follows from \eqref{eq:probability-error} and \eqref{eq:Jordan-power} that for sufficiently large $n$, we have 
\begin{equation} \label{eq:pe-expression-with-coefficients}
    p_e^{(n)} = \sum_{k=0}^{N-1}\sum_{\substack{i=1 \\ \lambda_{i,k \neq 1}}}^{s_k}\sum_{j=0}^{m_{i,k}-1} a_{i,j,k} {\binom{n}{j}}\lambda_{i, k}^{n-j}
\end{equation}
for some set $\{a_{i,j,k}\}_{i,j,k}$ of constants, where $a_{i,j,k}\in \mathbb{C}$ for all $i,j,k$. To understand the above, we recall that
\begin{align}
    p_e^{(n)} &= \sum_{k=0}^{N-1} \sum_{j=1}^{N-1} p_k e_{j}^T Q_k^n v^{(0)} \\
                &= \sum_{k=0}^{N-1} \sum_{j=1}^{N-1} p_k e_{j}^T S_k J_k^n S_k^{-1} v^{(0)}.
\end{align}
That is, $p_e^{(n)}$ is a linear combination of the elements of the matrices $J_k^n$ for $0\leq k \leq N-1$, and hence also a linear combination of powers of the eigenvalues $\lambda_{i,k}$. Further, by inspecting the elements of~\eqref{eq:Jordan-power}, the linear combination takes the form in~\eqref{eq:pe-expression-with-coefficients}.  
 
We then have that
\begin{align}
\label{eq:error-exponent-calculation}
    \xi &\coloneqq  -\lim_{n\to\infty} \frac{\ln{p_e^{(n)}}}{n}  \\ 
    &= -\lim_{n\to\infty}\frac{1}{n}\ln\!\left[\tau^n\left( \mathlarger{\sum}_{i,j,k} \frac{a_{i,j,k}}{\lambda_{i,k}^j} {\binom{n}{j}} \bigg(\frac{\lambda_{i,k}}{\tau}\bigg)^{n}\right)\right] \\ 
    &= -\ln\tau  - \lim_{n\to\infty} \frac{1}{n}\ln\!\left( \mathlarger{\sum}_{i,j,k} \frac{a_{i,j,k}}{\lambda_{i,k}^j} {\binom{n}{j}} \bigg(\frac{\lambda_{i,k}}{\tau}\bigg)^{n}\right) \label{eq:limit-second-term} \\ 
    &\geq -\ln{\tau}.
\end{align}
In the above, the first equality is due to \eqref{eq:pe-expression-with-coefficients}. The second equality is due to algebraic manipulation. To establish the final inequality, consider the following chain of reasoning:
\begin{align}
& \sum_{i,k}\sum_j\frac{a_{i,j,k}}{\lambda_{i,k}^{j}}\binom{n}{j}\left(
\frac{\lambda_{i,k}}{\tau}\right)  ^{n}\nonumber\\
& =\left\vert \sum_{i,k}\sum_j\frac{a_{i,j,k}}{\lambda_{i,k}^{j}}\binom{n}%
{j}\left(  \frac{\lambda_{i,k}}{\tau}\right)  ^{n}\right\vert \\
& \leq\sum_{i,k}\sum_j\frac{\left\vert a_{i,j,k}\right\vert }{\left\vert
\lambda_{i,k}\right\vert ^{j}}\binom{n}{j}\left(  \frac{\left\vert
\lambda_{i,k}\right\vert }{\tau}\right)  ^{n}\\
& =\sum_{\left(  i,k\right)  \in\mathcal{L}} \sum_{j} \frac{\left\vert
a_{i,j,k}\right\vert }{\left\vert \lambda_{i,k}\right\vert ^{j}}\binom{n}%
{j}\left(  \frac{\left\vert \lambda_{i,k}\right\vert }{\tau}\right)
^{n}\nonumber\\
& \qquad+\sum_{\left(  i,k\right)  \notin\mathcal{L}} \sum_{j} \frac{\left\vert
a_{i,j,k}\right\vert }{\left\vert \lambda_{i,k}\right\vert ^{j}}\binom{n}%
{j}\left(  \frac{\left\vert \lambda_{i,k}\right\vert }{\tau}\right)  ^{n}\\
& =\sum_{\left(  i,k\right)  \in\mathcal{L}} \sum_{j} \frac{\left\vert
a_{i,j,k}\right\vert }{\left\vert \lambda_{i,k}\right\vert ^{j}}\binom{n}%
{j}+e^{-\Omega(n)}\\
& =O(\text{poly}(n))+e^{-\Omega(n)},
\end{align}
where $\mathcal{L}$ denotes the set of pairs $\left(  i,k\right)  $\ for which
$\left\vert \lambda_{i,k}\right\vert =\tau$ (thus, if $\left(  i,k\right)
\notin\mathcal{L}$, then $\left\vert \lambda_{i,k}\right\vert <\tau$). In the
above, we have employed the triangle inequality and the fact that $\binom
{n}{j}\leq\frac{n^{j}}{j!}$. By applying the negative logarithm and using its anti-monotonicity, normalizing, and taking the limit $n\to\infty$, it follows that
\begin{multline}
    -\lim_{n \to \infty}\frac{1}{n}\ln\left(\sum_{i,k}\sum_{j} \frac{a_{i,j,k}}{\lambda_{i,k}^{j}}\binom{n}{j}\left(
\frac{\lambda_{i,k}}{\tau}\right)  ^{n}\right)
\geq \\
-\lim_{n \to \infty} \frac{1}{n} \ln\left(O(\text{poly}(n))+e^{-\Omega(n)}\right) = 0.
\end{multline}
This establishes the desired inequality
\begin{equation}
    \xi \geq -\ln \tau
\end{equation}
and concludes the proof.
\end{proof}

In Appendix~\ref{app:xi-bounds}, we state and prove simple lower bounds on $\xi$ in terms of the unitaries $\{ U_i \}_i$ and the states $\{ | \psi_i \rangle \}_i$.

\section{Examples} \label{sec:examples}

We now discuss how to optimize the performance of our state discrimination circuit in specific cases. To optimize the performance of our state discrimination circuit, it is necessary to find a set of unitaries that minimizes the probability of error. We may find expressions for these unitaries in simple cases, but this becomes difficult in the general case. An alternative route is to maximize the error exponent $\xi\coloneqq  -\lim_{n\to \infty} \frac{\ln{p_e^{(n)}}}{n}$. In the following, we discuss explicit state discrimination schemes for sets of qubit states.

\subsection{Two Qubit States}

Our first example is the simplest possible, where we consider that we are to discriminate between two pure qubit states $|\psi_0\rangle$ and $|\psi_1\rangle$.

To perform the state discrimination, we require unitaries $U_0$ and $U_1$ such that $U_0|\psi_0\rangle = |0\rangle$ and $U_1|\psi_1\rangle = |1\rangle$. We may write the two unitaries $U_0$ and $U_1$ in the form 
\begin{equation} \label{eq:qubit-unitaries}
\begin{split}
U_0 & = e^{i\phi_{0}} |0\rangle\!\langle\psi_0| 
+ e^{i\phi_{1}} |1\rangle\!\langle\psi_0^{\perp}|  \\
U_1 & = e^{i\phi_{2}} |1\rangle\!\langle\psi_1|
+ e^{i\phi_{3}} |0\rangle\!\langle\psi_1^{\perp}|
\end{split}
\end{equation}
where $\ket{\psi_0^{\perp}}$ and $\ket{\psi_1^{\perp}}$ are pure states orthogonal to $\ket{\psi_0}$ and $\ket{\psi_1}$, respectively.
We then have
\begin{equation}
\begin{split}
Q_0 & = |\langle 1|U_1|\psi_0\rangle|^2 = |\langle \psi_0|\psi_1\rangle|^2 \quad \text{and}\\
Q_1 & = |\langle 0|U_0|\psi_1\rangle|^2 = |\langle \psi_0|\psi_1\rangle|^2.
\end{split}
\end{equation}

From Proposition~\ref{error-exponent-proposition} and the Chernoff bound \cite{NS11,Li-Chernoff-Bound}, we have $\xi = -\ln{|\langle \psi_0|\psi_1\rangle |^2}$. We see that the average error probability $p_e^{(n)}$ is independent of the choice of the unitaries, as expected from Proposition~\ref{prob-formula}. Further, the error probability $p_e^{(n)}$ scales according to the Chernoff bound in~\eqref{eq:chernoff-bound-def}. 

\subsection{Arbitrary Set of Qubit States} \label{subsec:arbitrary-qubit-states}

We now show how to construct a set of unitaries for discriminating an arbitrary set of more than two pure qubit states, and we find that our construction ensures that the probability of error $p_e^{(n)}$ scales according to the multiple Chernoff bound, generalizing what we showed above for two qubit states. This means that the probability of error decays at the optimal rate, so that that our state discrimination scheme will perform better than or as well as any other scheme designed to discriminate qubit states in the asymptotic case.  

Let $\{|\psi_i\rangle\}_{i=0}^{N-1}$ be a set of $N>2$ qubit states, each of which is in a two-dimensional Hilbert space $\mathcal{H}$. Let $\{|i\rangle\}_{i=0}^{N-1}$ be a basis for an $N$-dimensional Hilbert space $\mathcal{H}'$.  
We will use the following isometries to perform our state discrimination protocol:
\begin{equation}\label{eq:isometry-definition}
    V_i = |i\rangle\!\langle \psi_i| + |i\oplus 1\rangle\!\langle \psi_i^{\perp}| ,
\end{equation}
where $| \psi_i^{\perp} \rangle $ is a pure state orthogonal to $|  \psi_i \rangle$. Note that each $V_i$ is an isometry mapping $\mathcal{H}$ to $\mathcal{H}'$ and satisfies $V_i | \psi_i\rangle = \ket{i}$. Given each isometry $V_i$, let  $U_i$  be its unitary extension, satisfying
\begin{equation}
    V_i \ket{\psi} = U_i \ket{\tilde{\psi}} 
\end{equation}
for every $\ket{\psi}\in \mathcal{H}$ and where $\ket{\tilde{\psi}} \in \mathcal{H}'$ denotes an embedding of $\ket{\psi}$ in $\mathcal{H}'$. Note that if $N=2^\ell$ for some integer~$\ell$, then we can set $\ket{\tilde{\psi}} = \ket{\psi} \ket{0}^{\otimes (\ell-1)}$.

We then get that for $0\leq k \leq N-1$, the matrix $P_k$ takes on the form 
\begin{multline}
P_k = \\
\begin{pmatrix} 
|\langle \psi_0|\psi_k\rangle|^2 & 0 & \hdots &  1-|\langle \psi_{N-1} |\psi_k\rangle|^2 \\
1-\langle \psi_0|\psi_k\rangle|^2 & |\langle \psi_1|\psi_k\rangle|^2 &    & 0 \\
0 & 1 - |\langle \psi_1|\psi_k\rangle|^2 & \ddots & \vdots \\
\vdots & \vdots & \ddots  & 0 \\
0 & 0 & \hdots &  |\langle \psi_{N-1}|\psi_k\rangle|^2
\end{pmatrix}.
\end{multline}

To construct each matrix $Q_k$, we delete the $(k+1)$-th rows and columns from $P_k$. Recall that the $(k+1)$-th column of each $P_k$ contains all zeroes except for a one in the $(k+1)$-th row. This means that $Q_0$ and $Q_{N-1}$ will be lower-triangular matrices. The other $Q_k$ matrices will consist of a diagonal, a sub-diagonal containing a zero element, and a non-necessarily-zero element $(Q_k)_{1,(N-1)}$, with all other elements set to zero. For such matrices, the eigenvalues are given by their diagonal entries. This can be seen by writing out the characteristic polynomial of the matrix, and taking care to expand out the determinant along the row or column containing the zero element of the sub-diagonal. Therefore, the largest eigenvalue of each matrix $Q_k$ is equal to $\max_{(i,j)} |\langle \psi_i | \psi_j\rangle |^2$. By Proposition~\ref{error-exponent-proposition} and the multiple Chernoff bound \cite{NS11,Li-Chernoff-Bound}, it follows that \linebreak $\xi = -\ln{\max_{i\neq j}} \{|\langle \psi_i|\psi_j\rangle|^2\}$.

Therefore, by appending sufficiently many ancillary qubits, our state discrimination circuit can be used to discriminate an arbitrary set of qubit states with the optimal scaling of the probability of error given by the multiple Chernoff bound in~\eqref{eq:chernoff-bound-def}. This result only concerns the scaling of the probability of error in the asymptotic case. Investigating how close to optimal the probability of error is when only a finite number of copies of the input state are available may prove to be a fruitful direction for future work. 

Note that one may not extend in a straightforward way the above procedure for constructing a set of unitaries $\{U_i\}$ that produce optimal scaling of the probability of error to states in a Hilbert space of dimension greater than two. Observe that the above procedure hinges on the matrices $Q_k$ having eigenvalues given by their diagonal. By the definition of the $Q_k$ matrix, specifying a set of isometries in a way similar to that of \eqref{eq:isometry-definition} for a higher dimensional system would require that other elements besides $(Q_k)_{1, N-1}$ and the elements on the diagonal and subdiagonal of $Q_k$ be nonzero. Such matrices do not in general have eigenvalues equal to their diagonal elements.  

When simulating a D-CTC with unitary interaction given by $V_{SC} = (\sum_{k=0}^{N-1} |k\rangle\!\langle k|\otimes U_k)\circ \operatorname{SWAP}$, it follows from our choice of the set $\{U_k\}_k$ that $|j\rangle\!\langle j|$ is the unique solution for $\sigma_C$ in~\eqref{eq:self-consistency-condition} whenever $\rho_S = |\psi_j\rangle\!\langle \psi_j|$.
In Appendix~\ref{app:uniqueness-qubit-discrimination}, we provide a proof inspired by the argument given in \cite{Heisenberg-Holevo-Violation-BHW} that this is indeed true for the example considered above, consisting of discriminating an arbitrary set of qubit states.

\subsection{BB84 States}


Here, we study how our state discrimination circuit can be used to discriminate a specific set of non-orthogonal states, i.e., the BB84 states $|0\rangle, |1\rangle, |+\rangle$, and $|-\rangle$. In general, identifying or performing the set of unitaries $\{U_i\}$ described above may be difficult. The authors of \cite{Heisenberg-Holevo-Violation-BHW} identified a set of unitaries that can discriminate the BB84 states, which we restate here. Building off of these authors' work, we show that the probability of error using this construction scales according to the multiple Chernoff bound in~\eqref{eq:chernoff-bound-def}. This is of significance because these operators provide an example of unitaries that both produce a probability of error that saturates the multiple Chernoff bound and are constructed from well-studied, standard quantum logic gates. 

As we described in the example earlier, we encode these four states into a four-dimensional Hilbert space.
Let $|\psi_0\rangle \equiv |00\rangle$, $|\psi_1\rangle \equiv |10\rangle$, $|\psi_2\rangle \equiv \ket{+0}$, and $|\psi_3\rangle \equiv \ket{-0}$. That is, we obtain the set of states $\{|\psi_i\rangle\}_{i=0}^{3}$ by appending an ancillary qubit in the $|0\rangle$ state. Let $|0\rangle \equiv |00\rangle, |1\rangle \equiv |01\rangle, |2\rangle \equiv |10\rangle, |3\rangle \equiv |11\rangle$.
Now let 
\begin{equation}
\begin{split}
    U_0 &= \operatorname{SWAP} , \\ 
    U_1 &= X \otimes X , \\ 
    U_2 &= (X\otimes I)\circ (H\otimes I), \\
    U_3 &= (X\otimes H)\circ \operatorname{SWAP}.
\end{split}
\end{equation}
Then $U_i|\psi_i\rangle = |i\rangle$ for $i \in \{0,1,2,3\}$.
It may be seen that the probability of error $p_e^{(n)}$ scales according to the Chernoff bound by constructing the $Q_k$ matrices for $0\leq k \leq 3$ and checking that their largest eigenvalues are each $\max_{i\neq j} |\langle \psi_i|\psi_j\rangle|^2 = 1/2$. 

Since an arbitrary set of four geometrically uniform qubit states is simply a rotation of the BB84 states on the Bloch sphere, one may saturate the Chernoff bound when using our state discrimination circuit to  distinguish any set of four geometrically uniform qubit states using only compositions of the standard qubit gates $X$, $H$, $\operatorname{SWAP}$, and rotations.

\section{Concluding Remarks}

In this paper, we have proposed a method for discriminating multiple non-orthogonal states, which is inspired by a construction considered in the context of closed timelike curves \cite{Heisenberg-Holevo-Violation-BHW}. Our state discrimination method can be equivalently recast as a local, iterative circuit whose simplicity lends itself to experimental implementation. Furthermore, we studied the average probability of error for our scheme and showed that in the general case of discriminating an arbitrary set of pure qubit states, it achieves the multiple Chernoff bound. 

We would like to point out three aspects of our work that require further investigation. It has been shown that a two-state local adaptive state discrimination scheme may be optimal for any number of copies~\cite{Two-State-Disc-Acin-et-al}. It remains open whether there exists a way to configure some aspect of our cirucit differently so that it is possible for our scheme to be optimal for any number of copies.
Also, while sets of unitaries do exist that optimize the performance of our circuit in the asymptotic limit for an arbitrary set of qubit states, it is unknown whether there exists a set of optimal unitaries for any set of qubit states. Furthermore, it is worth investigating this aspect of our work to see whether there exist a set of \textit{product} unitaries to optimize, or even make the performance of our circuit sub-optimally efficient, in the asymptotic limit. This would be beneficial because product operators are more convenient to implement experimentally.


Finally, any attempt at practical state discrimination will be subject to noisy conditions. Noise may have the ability to enhance or worsen the performance of a state discrimination scheme. A recent result due to~\cite{Noisy-Qubits-Flatt-et-al} has shown that the optimal measurement in the discrimination of two pure qubit states is no longer optimal when these qubit states are subject to perturbations. Studying the behavior of our state discrimination circuit in the presence of noise remains a topic for future work. 

\begin{acknowledgments}
We thank Osa Adun, Todd Brun, and Eneet Kaur for discussions. This research was supported by the NSF through the LSU Physics and Astronomy REU program (NSF Grant No.~1852356). VK acknowledges support from the LSU Economic Development Assistantship. VK and MMW acknowledge support from the US National Science Foundation via grant number 1907615. MMW acknowledges support from AFOSR (FA9550-19-1-0369). 
\end{acknowledgments}
 
\bibliographystyle{unsrt}

\pagebreak

\appendix
\onecolumngrid

\section{Upper and Lower Bounds on the Multiple Chernoff Exponent $\xi$} \label{app:xi-bounds}

We now state a result that allows us to bound the scaling of the average probability of error in terms of the set $\{U_i\}_i$ of unitaries  and the set $\{|\psi_i\rangle\}_i$ of states.  
\begin{proposition}\label{xi-estimate}
The following inequalities hold:
\begin{align}
    \xi & \geq -\ln\!\left(1-\min_{\substack{j,k:j\neq k}} |\langle k|U_j|\psi_k\rangle|^2\right)  \\
    \xi & \geq -\ln\!\left(\max_{\substack{j,k:j\neq k}} \sum_{i\neq k} |\langle j|U_i|\psi_k\rangle|^2\right),
\end{align}
where $\xi$ is defined in \eqref{eq:error-exponent-calculation}.
\end{proposition}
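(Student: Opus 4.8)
The plan is to obtain both inequalities as upper bounds on $\tau$. Since Proposition~\ref{error-exponent-proposition} already gives $\xi \geq -\ln\tau$ and the map $x \mapsto -\ln x$ is decreasing, any bound of the form $\tau \leq c$ immediately yields $\xi \geq -\ln c$. Recalling that $\tau = \max_{i,k}|\lambda_{i,k}| = \max_k \rho(Q_k)$, where $\rho$ denotes the spectral radius, it suffices to bound $\rho(Q_k)$ uniformly in $k$. The central tool is the standard fact that the spectral radius of a matrix is bounded above by any induced (hence submultiplicative) matrix norm \cite{Meyer}; I would apply this with the maximum-absolute-column-sum norm $\|\cdot\|_1$ to get the first inequality and the maximum-absolute-row-sum norm $\|\cdot\|_\infty$ to get the second.

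The key preliminary observation is that each $P_k$ is column-stochastic: for every column index $j$, unitarity of $U_j$ and normalization of $|\psi_k\rangle$ give $\sum_{i=0}^{N-1}|\langle i|U_j|\psi_k\rangle|^2 = \langle\psi_k|U_j^\dagger U_j|\psi_k\rangle = 1$. Recall that $Q_k$ is formed from $P_k$ by deleting the $(k+1)$-th row and column, so that its $(j,i)$ entry, for $i,j\neq k$, equals $|\langle j|U_i|\psi_k\rangle|^2$. For the first bound I would compute the column sums of $Q_k$: deleting the $(k+1)$-th entry from a column of $P_k$ that sums to one leaves the $j$-th column of $Q_k$ summing to $1 - |\langle k|U_j|\psi_k\rangle|^2$. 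Hence $\|Q_k\|_1 = 1 - \min_{j\neq k}|\langle k|U_j|\psi_k\rangle|^2$, and maximizing over $k$ gives $\tau \leq 1 - \min_{j,k:j\neq k}|\langle k|U_j|\psi_k\rangle|^2$, which is the first inequality after applying $-\ln$.

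For the second bound I would instead read off the row sums of $Q_k$: the $j$-th row sums to $\sum_{i\neq k}|\langle j|U_i|\psi_k\rangle|^2$, so $\|Q_k\|_\infty = \max_{j\neq k}\sum_{i\neq k}|\langle j|U_i|\psi_k\rangle|^2$. Taking the maximum over $k$ yields $\tau \leq \max_{j,k:j\neq k}\sum_{i\neq k}|\langle j|U_i|\psi_k\rangle|^2$, and applying $-\ln$ gives the second inequality.

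Because both statements reduce to the single fact $\rho(Q_k)\leq\|Q_k\|$ together with a direct computation of the row and column sums, I do not anticipate a serious obstacle. The only points requiring care are the bookkeeping of the row/column deletion so that the surviving sums are correctly identified, and the recognition that $P_k$ is column-stochastic rather than row-stochastic, which is precisely what makes the column-sum computation collapse to $1 - |\langle k|U_j|\psi_k\rangle|^2$.
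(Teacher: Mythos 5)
Your proposal is correct and follows essentially the same route as the paper: both arguments reduce the claim to the bound $\xi \geq -\ln\tau$ from Proposition~\ref{error-exponent-proposition}, observe that $P_k$ is column-stochastic so that the column sums of $Q_k$ collapse to $1-|\langle k|U_j|\psi_k\rangle|^2$, and bound $\tau$ by the maximum column sum and maximum row sum of the $Q_k$. The only difference is the standard fact invoked to pass from row/column sums to eigenvalues: you use the bound $\rho(Q_k)\leq\|Q_k\|_1$ and $\rho(Q_k)\leq\|Q_k\|_\infty$ for induced matrix norms, while the paper cites the Gerschgorin circle theorem; these are interchangeable here (and your choice is arguably the more elementary one, following directly from $|\lambda|\,\|x\| = \|Q_k x\| \leq \|Q_k\|\,\|x\|$ for an eigenvector $x$), so the substance of the proof is the same.
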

\begin{proof}
Let $0\leq k \leq N-1$. The maximum column sum of $Q_k$ is \begin{equation}
    \max_{j\neq k} \sum_{i\neq k} |\langle i|U_j|\psi_k\rangle|^2 = 1-\min_{j\neq k} |\langle k|U_j|\psi_k\rangle|^2 .
\end{equation} 
It follows from the Gerschgorin Circle Theorem~\cite{Bhatia} that 
the largest eigenvalue of $Q_k$ is bounded 
from above by the maximum column sum of $Q_k$. We recall that $\tau \coloneqq  \max_{i,k} |\lambda_{i,k}|$, i.e. $\tau$ is the largest of the absolute values of the eigenvalues of all of the $Q_k$ matrices. It follows that 
\begin{equation}
    \tau \leq 1-\min_{\substack{j,k:j\neq k}} |\langle k|U_j|\psi_k\rangle|^2.
\end{equation}
Then using Proposition~\ref{error-exponent-proposition}, we have
\begin{equation}
    \xi \geq -\ln \tau \geq 
    -\ln\!\left(1-\max_{\substack{j,k:j\neq k}} |\langle k|U_j|\psi_k\rangle|^2\right).
\end{equation}
It also follows from the Gerschgorin Circle Theorem that for $0\leq k \leq N-1$, the largest eigenvalue of $Q_k$ is bounded from 
above by the 
maximum row sum of $Q_k$. Hence, we have 
\begin{equation}
    \tau \leq 
    \max_{\substack{j,k\\j\neq k}} \sum_{i\neq k} |\langle j|U_i|\psi_k\rangle|^2,
\end{equation}
so that 
\begin{equation}
    \xi \geq -\ln \tau \geq 
    -\ln\!\left(\max_{\substack{j,k:j\neq k}} \sum_{i\neq k} |\langle j|U_i|\psi_k\rangle|^2\right) %
\end{equation}
This concludes the proof.
\end{proof}

\section{Uniqueness of the Fixed Point of the Channel~$\mathcal{N}_{V,\rho}$} \label{app:uniqueness-qubit-discrimination}

Here, we show the uniqueness of the fixed point of the channel $\mathcal{N}_{V,\rho}$ when discriminating an arbitrary set of qubit states, which we studied in Section~\ref{subsec:arbitrary-qubit-states}.

\begin{proposition}
Let $N\geq 3$. For each $0\leq i \leq N-1$, let $U_i$ be a unitary extension of the isometry $V_i$ defined in $\eqref{eq:isometry-definition}$. Let $V_{SC} = (\sum_k |k\rangle\!\langle k|\otimes U_k)\circ \operatorname{SWAP}$. Then for $0 \leq a \leq N- 1$, $|a\rangle\!\langle a |$ is the unique solution for $\sigma_C$ in $\eqref{eq:self-consistency-condition}$ whenever $\rho_S = |\psi_a \rangle\!\langle \psi_a|$. 
\end{proposition}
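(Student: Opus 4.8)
The plan is to show that the channel $\mathcal{N}_{V,\rho}$ with $\rho_S = |\psi_a\rangle\!\langle\psi_a|$ has a \emph{unique} fixed point, and then verify that this fixed point is $|a\rangle\!\langle a|$. The existence of $|a\rangle\!\langle a|$ as a fixed point is essentially immediate from the self-consistency argument already given in the main text: since $U_a|\psi_a\rangle = |a\rangle$, feeding $\sigma_C = |a\rangle\!\langle a|$ through the SWAP (which places $|\psi_a\rangle\!\langle\psi_a|$ into the $C$ register and $|a\rangle\!\langle a|$ into $S$) and then the controlled unitary triggers $U_a$, producing $|a\rangle\!\langle a|$ back in $C$ after the partial trace. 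So the crux is \emph{uniqueness}.

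First I would write down $\mathcal{N}_{V,\rho}$ explicitly for $\rho_S = |\psi_a\rangle\!\langle\psi_a|$. Using the form of $V_{SC}$ and tracing out $S$, an arbitrary input $\sigma_C$ is mapped to $\sum_{l} \langle l | \sigma_C | l\rangle\, U_l |\psi_a\rangle\!\langle\psi_a| U_l^{\dagger}$, exactly the computation in \eqref{eq:probability-measurement-outcome}. The key structural observation is that the output depends on $\sigma_C$ only through its diagonal entries $\{\langle l|\sigma_C|l\rangle\}_l$ in the measurement basis. Hence any fixed point $\sigma_C$ must itself be a convex combination $\sum_l q_l\, U_l|\psi_a\rangle\!\langle\psi_a|U_l^{\dagger}$ of the pure states $U_l|\psi_a\rangle\!\langle\psi_a|U_l^\dagger$, with weights $q_l = \langle l|\sigma_C|l\rangle$ forming a probability distribution. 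Taking diagonal entries of both sides of the fixed-point equation reduces the problem to a classical one: the weight vector $q = (q_0,\dots,q_{N-1})^T$ must satisfy $P_a\, q = q$, i.e.\ $q$ is a stationary distribution of the stochastic matrix $P_a$ defined in \eqref{eq:pk-definition}.

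The plan is therefore to prove that $P_a$ has a \emph{unique} stationary distribution, namely $e_{a+1}$ (the point mass on outcome $a$). Here I would exploit the special structure of the qubit isometries $V_i$: by construction $|a\rangle\!\langle a|$ is an absorbing state of the Markov chain, since $U_a|\psi_a\rangle = |a\rangle$ forces the $(a+1)$-th column of $P_a$ to be $e_{a+1}$. To get uniqueness I would argue that outcome $a$ is reachable from every other state with positive probability, so that $a$ is the unique recurrent class and all other states are transient; the transient block is exactly the matrix $Q_a$, and one shows its spectral radius is strictly less than $1$ (equivalently $(I - Q_a)$ is invertible), which kills any stationary mass off of $a$. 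The nonzero transition probabilities $|\langle j|U_i|\psi_a\rangle|^2$ encoding reachability are guaranteed by the overlap conditions $\langle j|U_i|\psi_j\rangle \neq 0$ type constraints inherent in the construction, and for the qubit case by the explicit sub- and super-diagonal structure of $P_a$ displayed in Section~\ref{subsec:arbitrary-qubit-states}. Once $q = e_{a+1}$ is forced, the fixed point collapses to $\sigma_C = U_a|\psi_a\rangle\!\langle\psi_a|U_a^\dagger = |a\rangle\!\langle a|$, giving uniqueness.

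I expect the main obstacle to be rigorously establishing the irreducibility/reachability claim that pins down $e_{a+1}$ as the only stationary distribution, since this requires verifying that the off-diagonal transition probabilities of $P_a$ are arranged so that the chain started anywhere eventually hits $a$ almost surely. For the $N=2$ case this is trivial, but for $N \geq 3$ with the cyclic $i \mapsto i\oplus 1$ structure of \eqref{eq:isometry-definition} one must trace the connectivity of the induced directed graph carefully and handle the boundary cases $a=0$ and $a=N-1$ (where $Q_a$ is triangular) separately from the generic $a$. A clean way to sidestep subtle chain-theoretic arguments is to invoke directly that $\tau = \max_{i,k}|\lambda_{i,k}| < 1$ — which is precisely the content guaranteeing exponential decay in Proposition~\ref{error-exponent-proposition} — so that $Q_a^n \to 0$ and hence $v_a^{(n)} \to 0$ regardless of the starting vector, forcing all non-$a$ diagonal weight to vanish in the limit and thereby ruling out any competing fixed point.
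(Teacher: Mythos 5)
Your reduction of the problem coincides in substance with the paper's first step: writing the fixed-point equation as $\sigma = \sum_l \langle l|\sigma|l\rangle\, U_l |\psi_a\rangle\!\langle \psi_a| U_l^\dagger$ and observing that the diagonal vector $q$ must satisfy $q = P_a q$. Where you genuinely diverge is in how uniqueness of the stationary vector is established. The paper runs an explicit induction around the cycle: using the structure $V_i = |i\rangle\!\langle\psi_i| + |i\oplus 1\rangle\!\langle\psi_i^\perp|$, it first shows $\langle a\ominus 1|\sigma|a\ominus 1\rangle = 0$, then propagates $\langle a\ominus s|\sigma|a\ominus s\rangle = 0 \Rightarrow \langle a\ominus(s+1)|\sigma|a\ominus(s+1)\rangle = 0$, each step using only the non-degeneracy condition $1 - |\langle \psi_i|\psi_a\rangle|^2 \neq 0$ for $i\neq a$ (pairwise distinct states); positivity of $\sigma$ then kills the off-diagonal entries. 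Your route instead treats $a$ as an absorbing state of the Markov chain and argues that the transient block $Q_a$ has spectral radius strictly less than $1$, so that $v = Q_a v$ forces $v = 0$. Both mechanisms work, and yours has the merit of isolating a single linear-algebraic fact from which everything follows; the paper's induction is more elementary and self-contained, but is tied to the cyclic structure of \eqref{eq:isometry-definition}.

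However, your justification of the key fact is wrong as stated. Proposition~\ref{error-exponent-proposition} does \emph{not} assert $\tau < 1$: it proves only the conditional bound $\xi \geq -\ln\tau$, which is vacuously true when $\tau = 1$ (then $-\ln\tau = 0$ and no decay is guaranteed). Citing it as ``the content guaranteeing exponential decay'' is circular---nothing in that proposition rules out a second recurrent class of the chain, which is exactly what uniqueness requires. The correct source for the spectral claim is the eigenvalue computation of Section~\ref{subsec:arbitrary-qubit-states}: for the isometries \eqref{eq:isometry-definition}, each column $l$ of $P_a$ has support only on rows $l$ and $l \oplus 1$, so $Q_a$ is lower triangular when $a \in \{0, N-1\}$ and otherwise has only a diagonal, a broken subdiagonal, and one corner entry; in all cases its eigenvalues are its diagonal entries $|\langle\psi_i|\psi_a\rangle|^2$, $i \neq a$, each strictly less than $1$ because the states are pairwise distinct. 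With that substitution your argument closes: $Q_a^n \to 0$, the only stationary distribution of $P_a$ is $e_{a+1}$, and the fixed-point equation then gives $\sigma = U_a|\psi_a\rangle\!\langle\psi_a|U_a^\dagger = |a\rangle\!\langle a|$. Note also that once the spectral radius bound is in hand, you need none of the reachability/irreducibility bookkeeping you worried about in your middle paragraph---that statement already implies every state other than $a$ is transient.
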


\begin{proof} 
We argued that $|a\rangle\!\langle a|$ is a solution for $\sigma_C$ whenever $\rho_S = |\psi_a \rangle\!\langle \psi_a|$ in Section \ref{subsec:prelim-CTC-description}. It remains to show uniqueness. Suppose $\rho_S = |\psi_a\rangle\!\langle\psi_a|$ and $\sigma$ is a solution for $\sigma_C$ in~\eqref{eq:self-consistency-condition}. Then we have
\begin{align}
    \sigma &= \operatorname{Tr} _{S }\bigg\{V_{SC }(|\psi_a\rangle\!\langle \psi_a| \otimes \sigma) V_{SC }^\dagger \bigg\}\\ \nonumber
    &= \operatorname{Tr} _{S }\bigg\{\bigg(\sum_k |k\rangle\!\langle k| \otimes U_k\bigg)(\sigma \otimes |\psi_a\rangle\!\langle \psi_a| )\bigg(\sum_l |l\rangle\!\langle l| \otimes U_l^\dagger \bigg) \bigg\} \\  \nonumber
    &= \sum_{k, l} \operatorname{Tr} _{S } \bigg\{|k\rangle\!\langle k|\sigma|l\rangle\!\langle l| \otimes U_k|\psi_a\rangle\!\langle \psi_a| U_l^\dagger \bigg\} \\ \nonumber
    &= \sum_k \langle k |\sigma| k \rangle U_k|\psi_a\rangle\!\langle\psi_a| U_k^\dagger.
\end{align}
In the above, the first line is due to the self-consistency condition~\eqref{eq:self-consistency-condition}. The following equalities come from explicitly writing out $V_{SC}$ and algebraic manipulation.

Hence, the matrix elements of $\sigma$ are given by
\begin{equation}\label{eq:rho-matrix-el}
    \langle m |\sigma| n \rangle = 
    \sum_{k} \langle k|\sigma| k \rangle\!\langle m|U_k|\psi_a\rangle\!\langle \psi_a|U_k^\dagger|n\rangle.
\end{equation}

We now show that all diagonal elements of $\sigma$ other than $\langle a|\sigma|a\rangle$ are zero. We will proceed by induction to show that $\langle a\ominus s| \sigma |a\ominus s\rangle= 0$ for all $1\leq s \leq N-1$. (Here, $\ominus$ denotes subtraction modulo $N$.) We first show that $\langle a\ominus 1|\sigma|a\ominus 1\rangle = 0$. It follows from~\eqref{eq:rho-matrix-el} that 
\begin{equation}
    \langle a|\sigma| a \rangle = \langle a|\sigma| a \rangle + \sum_{k\neq a} \langle k|\sigma| k \rangle |\langle a|U_k|\psi_a\rangle |^2,
\end{equation}
which implies that $\sum_{k\neq a} \langle k|\sigma| k \rangle |\langle a|U_k|\psi_a\rangle|^2 = 0$. It follows that
\begin{equation}
\begin{split}
    \langle a\ominus 1|\sigma|a\ominus 1\rangle(1-|\langle \psi_{a\ominus 1}|\psi_a\rangle |^2) 
    &=
    \langle a\ominus1|\sigma|a\ominus1\rangle|\langle \psi_{a\ominus1}^\perp | \psi_a \rangle |^2 \\ &=
    \langle a\ominus1|\sigma|a\ominus1\rangle |\langle a | U_{a\ominus 1}|\psi_a\rangle |^2 \\&=
    \sum_{k\neq a} \langle k|\sigma|k \rangle |\langle a|U_k|\psi_a\rangle|^2 
    \\&= 0.
\end{split}
\end{equation}
Since $1-|\langle \psi_{a\ominus 1}|\psi_a\rangle |^2\neq 0$, we then have that $\langle a\ominus1|\sigma|a\ominus1 \rangle = 0$. Now suppose $1\leq s < N-1$ is such that $\langle a\ominus s|\sigma|a\ominus s\rangle = 0$. Using $\eqref{eq:rho-matrix-el}$ and the definition of the unitaries $\{U_i\}$, we have
\begin{equation}
\begin{split}
    \langle a\ominus s|\sigma|a\ominus s\rangle &= \sum_k \langle k|\sigma|k \rangle |\langle a \ominus s|U_k|\psi_a \rangle |^2 \\
    & = \langle a\ominus s|\sigma|a\ominus s\rangle|\langle a\ominus s|U_{a\ominus s}|\psi_a\rangle|^2 
     +
    \langle a\ominus(s+1)|\sigma|a\ominus(s+1)\rangle |\langle a\ominus s|U_{a\ominus(s+1)}|\psi_a\rangle|^2.
\end{split}
\end{equation}
Since $\langle a\ominus s|\sigma|a\ominus s\rangle = 0$, it follows that  
\begin{equation}\label{eq:ind-pf-final-step}
\begin{split}
    0 &= \langle a\ominus(s+1)|\sigma|a\ominus(s+1)\rangle |\langle a\ominus s|U_{a\ominus(s+1)}|\psi_a\rangle|^2 \\
    &= \langle a\ominus(s+1)|\sigma|a\ominus(s+1)\rangle
    |\langle \psi_{a\ominus(s+1)}^\perp|\psi_a\rangle|^2 \\
    &= \langle a\ominus(s+1)|\sigma|a\ominus(s+1)\rangle
    (1-|\langle \psi_{a\ominus(s+1)}|\psi_a \rangle|^2).
\end{split}
\end{equation}
Since $1\leq s< N-1$, we must have $a\ominus(s+1) \neq a$, so that $1- |\langle \psi_{a\ominus(s+1)}|\psi_a\rangle |^2\neq 0$. It then follows from \eqref{eq:ind-pf-final-step} that $\langle a\ominus(s+1)|\sigma|a\ominus(s+1)\rangle = 0$. By induction, we have $\langle a\ominus s|\sigma|a\ominus s\rangle = 0$ for $1\leq s \leq N-1$. Hence, all diagonal elements of $\sigma$ other than $\langle a|\sigma| a\rangle$ are zero. 

Consequently, we must have $\langle k|\sigma|k\rangle = \delta_{ka}$ since $\operatorname{Tr}  [\sigma] = 1$. Any density operator of this form has off-diagonal elements that are all zero. Therefore, we have $\sigma = |a\rangle\!\langle a|$. Hence, $|a\rangle\!\langle a|$ is the unique solution for $\sigma_C$ in $\eqref{eq:self-consistency-condition}$.
\end{proof}

\end{document}